\setlist{nosep}
\newif\ifllncs
\newcommand{\authorrunning}[1]{} 
\newcommand{\titlerunning}[1]{} 
\newcommand{\institute}[1]{\affil{#1}}
\newcommand{\inst}[1]{}
\newcommand{\email}[1]{\texttt{#1}}
\newcommand{\orcidID}[1]{$^{\textrm{[#1]}}$}
\newcommand{\keywords}[1]{}
\date{}
\newtheorem{lemma}{Lemma}
\newtheorem{proposition}{Proposition}
\newtheorem{conjecture}{Conjecture}
\newenvironment{proof}[1][\unskip]{\paragraph{Proof \textit{#1}.}}{}
\newcommand{\qed}{\hfill\ensuremath{\scriptstyle \blacksquare}}
\newcommand{\kmer}{$k$-mer\xspace}
\newcommand{\kmers}{\kmer{}s\xspace}
\newcommand{\fs}{\ensuremath{\varphi}\xspace} 
\DeclareMathOperator{\sketch}{\mathcal{M}}
\DeclareMathOperator{\powerset}{\mathcal{P}}
\DeclareMathOperator{\lc}{lc} 
\DeclareMathOperator{\rc}{rc} 
\DeclareMathOperator{\hit}{\mathcal{H}} 
\DeclareMathOperator{\sig}{\mathcal{S}} 
\DeclareMathOperator{\sigim}{\mathcal{I}}
\newcommand{\gmds}{\ensuremath{G_{\textrm{MDS}}}\xspace} 
\newcommand{\gcomp}{\ensuremath{G_{\text{comp}}}\xspace} 
\newcommand{\gpcr}{\ensuremath{G_{\text{PCR}}}\xspace} 
\newcommand{\restrict}[1]{\raisebox{-.5ex}{\ensuremath|}_{#1}}
\newcommand{\im}[2]{#1\restrict{#2}} 
\newcommand {\dprime}{''} 
\newcommand{\SymDiff}{\mathbin{\Delta}} 
\begin{document}

\title{Sketching methods with small window guarantee using minimum decycling sets}
%
%

\author{Guillaume~Mar\c{c}ais\inst{1}\orcidID{0000-0002-5083-5925} \and
Dan~DeBlasio\inst{1}\orcidID{0000-0003-4110-4431} \and
Carl~Kingsford\inst{1}\orcidID{0000-0002-0118-5516}}
\authorrunning{G.\ Mar\c{c}ais et al.}
\institute{Computational Biology Department, Carnegie Mellon University, \\
  Pittsburgh PA 15213, USA \\
  \email{\{gmarcais,deblasio,carlk\}@cs.cmu.edu}}

\maketitle              

\begin{abstract}
  Most sequence sketching methods work by selecting specific \kmers from sequences so that the similarity between two sequences can be estimated using only the sketches.
  Because estimating sequence similarity is much faster using sketches than using sequence alignment, sketching methods are used to reduce the computational requirements of computational biology software packages.
  Applications using sketches often rely on properties of the \kmer selection procedure to ensure that using a sketch does not degrade the quality of the results compared with using sequence alignment. Two important examples of such properties are locality and window guarantees, the latter of which ensures that no long region of the sequence goes unrepresented in the sketch.\smallskip

  A sketching method with a window guarantee, implicitly or explicitly, corresponds to a \emph{Decycling Set}, an unavoidable sets of \kmers.
  Any long enough sequence, by definition, must contain a \kmer from any decycling set (hence, it is unavoidable).
  Conversely, a decycling set also defines a sketching method by choosing the \kmers from the set as representatives.
  Although current methods use one of a small number of sketching method families, the space of decycling sets is much larger, and largely unexplored.
  Finding decycling sets with desirable characteristics (e.g.\@, small remaining path length) is a promising approach to discovering new sketching methods with improved performance (e.g.\@, with small window guarantee).\smallskip

  The \emph{Minimum Decycling Sets} (MDSs) are of particular interest because of their minimum size.
  Only two algorithms, by Mykkeltveit and Champarnaud, are previously known to generate two particular MDSs, although there are typically a vast number of alternative MDSs.
  We provide a simple method to enumerate  MDSs.
  This method allows one to explore the space of MDSs and to find MDSs optimized for  desirable properties.
  We give evidence that the Mykkeltveit sets are close to optimal regarding one particular property, the remaining path length. A number of conjectures and computational and theoretical evidence to support them are presented.\smallskip

  Code available at \url{https://github.com/Kingsford-Group/mdsscope}.

  \keywords{Sequence sketching \and Minimizers \and Syncmers \and Decycling sets.}
\end{abstract}

\section{Introduction}

Sketching methods, such as minimizers~\cite{minimizers1} or open-syncmers~\cite{syncmers}, distill a long sequence into a smaller ``sketch,'' a set of \kmers and their positions in the sequence.
By comparing these sketches, it is possible to quickly estimate whether two sequences are similar and may have a good quality alignment between them, or not.
Because sketching methods greatly reduce the computational needs in many genomics algorithms with usually little impact on the quality of the result, they are used in many computational biology software packages (see~\cite{minimizers-review} for a review).

For our purposes, a \kmer sketching method is modeled by a function \fs that takes a \emph{context} as an input (a substring of the input sequence of fixed length $c$) and outputs a set of positions within the context of the selected \kmers.
The output of \fs can be the empty set, meaning that nothing is selected in this context.
The sketch $\sketch_{\fs}(S)$ for a sequence $S$ is the union of all selected positions over all the contexts of $S$ (see Section~\ref{sec:notations}).
This sketch contains a subset of all the \kmers in $S$ as the function \fs might not pick any \kmer in a context or adjacent contexts may pick the same locations.

The two properties of sketching methods that downstream applications rely on to prove correctness are:
\begin{description}
  \item[Locality]
        The property that similar sequences (i.e., that have reasonably long identical subsequences) will have common elements in their sketches, and hence long enough matches will be detected using the sketches.
        This is naturally satisfied because the selection is done using a function (\fs), therefore two sequences that share an exact substring of length at least $c$ will select the same \kmers in that context.
  \item[Window guarantee]
        The maximum distance $w$ between two selected \kmers is the \emph{window} size or guarantee.
        A small window size guarantees that no large part of a sequence is ignored.
        Equivalently, the window property means \kmers are selected at approximately regular intervals.
\end{description}

Sketching methods are usually optimized for two metrics, \emph{density}~\cite{winnowing} and \emph{conservation}~\cite{syncmers}.
The density is the relative size of the sketch, formally defined as $|\sketch_{\fs}(M)|/|S|$.
A lower density is desirable as a smaller sketch usually implies less computation and lower memory requirements.
The conservation is the proportion of elements that are common between a sketch of $S$ and a sketch of a slightly mutated sequence $S'$, where the common elements are either \kmers or subsequences covered by these \kmers.
Higher conservation is desirable because it usually correlates to higher sensitivity to detect sequence similarities in the face of mutations and errors.
For a fixed $k$, a smaller context size leads to higher conservation, as the presence of a \kmer in the sketch of the mutated $S'$ may be affected by mutations in the entire context~\cite{TheoryLocalKmer2022}.


Not all sketching methods satisfy the window guarantee property (i.e.\@, for some sketching methods, there are infinitely long sequences $S$ with an empty sketch; see Section~\ref{sec:wind-guar-exist}).
However, sketching methods that do not satisfy the window property are problematic in two ways.
First, most algorithms using a sketching method do not have a proof of correctness in cases without the window property (e.g.\@, an aligner may miss arbitrarily long, good quality alignments, preventing claims of sensitivity).

Second, the sketch optimization problem is ill-formed without the window property.
The empty selection function that returns the empty set for any input sequence satisfies vacuously the locality property, it has perfect conservation, and it has the lowest possible density. But of course, 
no information is preserved in an empty sketch and this trivial solution is not useful.
The existence of trivial solutions is not a purely theoretical concern.
When optimizing sketching methods using machine learning, almost empty (and not practically useful) solutions are found if no window constraint is used in the loss function~\cite{maskedminimizers}.

A set of \kmers $M$ is \emph{unavoidable} if any infinitely long sequence must have \kmers from $M$.
Because any sequence uniquely corresponds to a path in the de~Bruijn graph $D_{k}$ of order $k$, an equivalent point of view is the \emph{decycling} sets (DS): $M$ is an unavoidable set of \kmers (and a decycling set) if and only if $D_{k} \setminus M$, the de~Bruijn graph $D_{k}$ with the \kmers from $M$ removed, is a  directed acyclic graph (DAG).

There is a strong two way connection between such decycling sets and sketching methods with a window guarantee.
Consider the set $M_{\fs}$ of possibly selected \kmers (the union of all \kmers selected over every possible context) for sketching method \fs.
 If the sketching method has a window guarantee, then $M_{\fs}$ is a decycling set.
Moreover, the window size of \fs is equal to the \emph{remaining path length} of $M_{\fs}$, i.e.\@, the length of the longest path in the DAG $D_{k} \setminus M_{\fs}$.

The function \fs of a sketching method with the smallest possible context ($c = k$, aka \emph{context-free} methods, such as syncmers) is equivalent to the indicator function of its set $M_{\fs}$: as the input context contains only one \kmer, the output of \fs is not empty exactly when the input \kmer is in $M_{\fs}$.
A sketching method with a larger context may not select every occurrence of \kmers in $M_{\fs}$ from $S$.
For example, a context may contain multiple \kmers from $M_{\fs}$ but the function \fs only selects one of them~\cite{deblasio_practical_2019}.
In other words, given two sketching methods, one context-free and one with a context, having the same set of possibly selected \kmers, the method with a context can lower its density at the expanse of having a lower conservation.
Conversely, given a decycling set $M$, the indicator function of $M$ defines a context-free sketching method with a window guarantee.

This connection between decycling sets and sketching methods suggests, first, that the properties of the decycling sets ultimately define the properties of the associated sketching method.
In other words, by studying the space of decycling sets we gain insights into the design space of sketching methods.
Second, the space of decycling sets is much larger than the decycling sets generated by the few families of sketching methods currently used.
Rather than creating \emph{ad hoc} sketching methods, a promising strategy is to  find a decycling set with desirable properties and use the  sketching method associated with this set.

In this study we focus on minimum-size decycling sets (MDS).
MDSs provide a logical starting point for the study of decycling sets.
First, the MDSs are by definition as small as possible, therefore reducing as much as possible the cost of storing and querying such a set.
Second, these sets are likely to have short remaining path lengths, corresponding to sketching methods with small window guarantee.

After describing the window guarantee of common sketching methods, we describe the structure of the de~Bruijn graph and of its cycles.
We then give two simple graph operations that can be used to enumerate MDSs.
Provided Conjecture~\ref{conj:imove-conn} is true (for which we provide ample theoretical and experimental evidence), all MDSs can be reached with these operations.
Using these operations we design an optimization procedure to find MDSs with short remaining path lengths.
This optimization procedure gives further insight on the range of possible window guarantee for sketching methods and on the of the well-known Mykkeltveit set.

The conjectures and optimization methods proposed here are the basis to further the understanding of MDSs and the design space of the sketching methods that are central to computational biology algorithms, in particular sketching methods with a small context and a strong window guarantee.

\section{Preliminaries and notations}\label{sec:notations}


An alphabet is a small set $\Sigma$ of size $\sigma = |\Sigma|$.
Although the results generalize to any alphabet size, we consider the binary alphabet $\Sigma = \{0, 1\}$ and the DNA alphabet $\{A, C, G, T\}$ of size $4$.
A sequence $S$ is an element of $\Sigma^{*}$, and sequences are indexed starting at $1$.
$S[a:k]$ represent the subsequence starting at position $a$ of length $k$, i.e.\@, the $a$th \kmer of $S$.
$[n]$ is the set of integers $\{1, \ldots, n\}$.

A sketching scheme is defined by its selection function $\fs: \Sigma^{c} \rightarrow \powerset([c - k + 1])$, where $\mathcal{P}$ denotes the power set.
The contexts of $S$ are all the subsequences of length $c$: $S[c] = \{ S[i:c] \mid i \in [|S|-c+1] \}$.
The sketch of $S$ is the set of the positions of the selected \kmers in $S$: $\sketch_{\fs}(S) = \bigcup_{s\in S[c]} \{ i + o \mid o \in \fs(s) \}$.
The set of all possibly selected \kmers for the sketching method $\fs$ is $M_{\fs} = \bigcup_{{s\in S[c]}} \{ s[o:k] \mid o \in \fs(s) \}$.

The de~Bruijn graph of order $k$ is the directed graph $D_{k}= (\Sigma^{k}, E_{k})$, where each \kmer is a node and the edges $u \rightarrow v$ represent the suffix-prefix relationship $u[2:k-1] = v[1:k-1]$.
The de~Bruijn graph is $\sigma$-regular, Eulerian and Hamiltonian.
For convenience, short strings, such as \kmers, are commonly represented as based-$\sigma$ numbers.

\section{Window guarantee of existing sketching schemes}\label{sec:wind-guar-exist}

We review sketching methods commonly used in computational biology and evaluate their window guarantee.

\paragraph{Hash-based methods.}

Hash methods use a hash function $h$ and select the \kmers $m$ that satisfy, for example, $h(m) = 0 \mod p$ or $h(m) < t$ for some predefined constants $p,t$~\cite{KarpRabin,UniverseMinimizers}.
Effectively the hash function randomizes the \kmers and the criteria selects a subset of the \kmers.
Other methods apply a sketching method like minimizers or syncmers and further down-sample the sketch using a hash function~\cite{FractionalHittingSets,syncmers}.

In general these methods do not have a window guarantee and, historically, this was one of the motivations for Schleimer~\cite{winnowing} to introduce the \emph{winnowing scheme} (which is equivalent to minimizers).
Although these schemes can have low density and have a short context ($c = k$), it is achieved at the cost of having no window guarantee.
For example, by choosing low values of the threshold $t$, the density can be made arbitrarily low, but the number of distinct cyclic sequences not covered by the scheme increases dramatically.

\paragraph{Window-based methods.}

These methods always pick at least one \kmer in each context, therefore the context and the window guarantee are closely linked.

The minimizer scheme has three parameters $(k, w, \mathcal{O})$ and in each window of $w$ consecutive \kmers (i.e., the context is a substring of length $w+k-1$), the selection function returns the position of the smallest \kmer according to the order $\mathcal{O}$~\cite{minimizers1,minimizers2}.
There are many ways to select the order $\mathcal{O}$~\cite{polarsets,miniception,DeepMinimizer,winnowmap}, for example to improve the density, but because the selection function never returns the empty set, all these methods have a window guarantee of $w$, independent of the choice of $\mathcal{O}$.

The density of minimizers schemes is usually between $1.5/(w+1)$ and $2/(w+1)$~\cite{improvingminimizers,asymptoticminimizers}, and the context length is $c = w+k-1$.
Density can be lowered by increasing $w$, although this increases the context length (hence weakens the locality and lowers the conservation).
Having a coupling between the window guarantee and the context length constrains the parameter choices for minimizer schemes.

Compared to minimizers, the minmers scheme~\cite{minmers} adds a fourth parameter $d$: in each window of $dw$ consecutive \kmers the selection function returns the position of the $d$ smallest \kmers according to $\mathcal{O}$.
Minmers achieve a density closer to $1/w$ while having a significantly longer context of $dw+k-1$.

\paragraph{Positional minimums.}

Under this generic name are methods such as open-syncmers~\cite{syncmers}, masked minimizers~\cite{maskedminimizers} and parameterized syncmers~\cite{ParameterizedSyncmer}.
These schemes have four parameters $(k, s, \mathcal{O}, m)$ where $s \le k$ and $m$ is a non-empty bit-mask of length $k$.
A context of length $c=k$ is selected if the smallest $s$-mer in the context (choose left-most to break ties) is at position $i$ and bit $i$ is set in the mask $m$.

Whether these schemes have a window guarantee depends on whether the first bit of $m$ is set.
If the first bit is set and a \kmer is selected, then this implies that an $s$-mer at position $i>1$ is strictly smaller than the $s$-mer at position $1$, forming a decreasing list of $s$-mers.
As the \kmers are shifted along the sequence, this decreasing list of $s$-mers must eventually come to an end, hence there is a window guarantee.
This window guarantee is weak as the window can be as long as $\sigma^{k-1}$ (see Supplementary Material~\ref{sec:expon-wind-guar}).

If the first bit is not set, because of the left-most tie breaking rule, there is no window guarantee.
Hence, these methods have a short context and a weak or missing window guarantee.


\section{Cycle structure of the de~Bruijn graph}


There exists two methods to generate decycling sets of minimum size by Mykkeltveit~\cite{mykkeltveitgolomb} and Champarnaud~\cite{champarnaud}.
These algorithms are of great theoretical importance as they settled a conjecture of Golomb~\cite{golomb} on the size of an MDS.
They are also practical algorithms as membership in these MDSes is testable in time and memory polynomial in $k$ (i.e., the entire set does not need to be precomputed and stored).
But, as we shall see, the space of all MDSs is much larger than these two MDSs.

We provide a method that uses only two simple graph operations---called F-move and I-move---that transform an MDS into another MDS.
Furthermore, we conjecture that these two operations are sufficient to enumerate all MDSs.
In other words, given a graph where the nodes are all the MDSs and the edges represent these operations, Conjecture~\ref{conj:imove-conn} states that this graph is strongly connected.
We give theoretical and computational evidence to support this conjecture.

This section describes the structure of the cycles in the de~Bruijn and how through these two operation MDSs interact with the cycles.
Although these two operations are similar in nature and together they might enumerate all MDSs, we describe them separately as they have qualitatively distinct effects on the MDSs (see Proposition~\ref{thm:fmove-preserve-mds} and Conjecture~\ref{conj:imove-signature}).

A \emph{pure cycling register} (PCR), aka a \emph{conjugacy class}, is a cycle in the de~Bruijn graph made of the circular permutation of a \kmer.
For example, the PCR of the $4$-mer $1011$ over the binary alphabet is $1011 \rightarrow 0111 \rightarrow 1110 \rightarrow 1101 \rightarrow 1011$.
The PCRs form a partition of the \kmers and therefore any MDS must contain at least one \kmer from each PCR.
We call a \kmer set with exactly one \kmer in each PCR a \emph{PCR set}.
The theorems of Mykkeltveit~\cite{mykkeltveitgolomb} and Champarnaud~\cite{champarnaud} show that every MDS is a PCR set.
On the other hand, not every PCR set is an MDS.

\subsection{F-moves}

\begin{figure}
  \centering
  \includegraphics{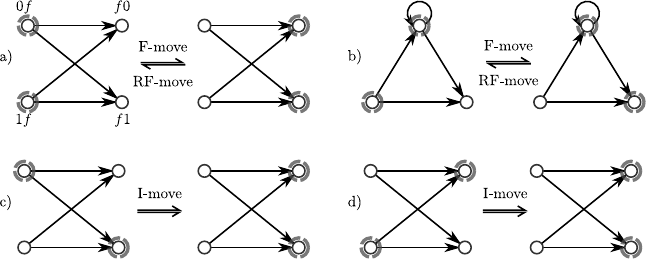}
  \caption{
    a)~For $f \in \Sigma^{k-1}$, the left-companions (\kmers $0f$ and $1f$ for the binary alphabet) and right-companions ($f0$ and $f1$) induce a directed complete bipartite $K_{\sigma,\sigma}$.
    When the left-companions are in the set (left subgraph, highlighted in gray), an F-move replaces these nodes with the right-companions (right subgraph).
    An RF-move is the reverse operation, replacing the right-companions with the left-companions.
    b)~When one \kmer is a homopolymer, the induced subgraph is slightly different, but the F-moves and RF-moves are defined analogously.
    c)~One of the possible I-move, $\im{f}{1}$, where a mixture of left- and right-companions are in the set.
    d)~The other possible I-move, $\im{f}{2}$.
    For any $f \in \Sigma^{k-1}$ there are $1$ F-move, $1$ RF-move and $2^{\sigma} - 2$ I-moves possible, unless $f$ is a homopolymer.
  }
  \label{fig:moves-companions}
\end{figure}

The \emph{left-companions} (resp\@. \emph{right-companions}) is the set of \kmers that have the same suffix (resp\@. prefix).
Given $f \in \Sigma^{k-1}$, then $\lc(f) \triangleq \{ af \mid a \in \Sigma \}$ are the left companions sharing the suffix $f$, and $\rc(f) \triangleq \{ fa \mid a \in \Sigma \}$ are the right companions.
See Figure~\ref{fig:moves-companions} for examples.
If $f=a^{k-1}$, then the \kmers $af$ and $fa$ are equal (homopolymer $a^{k}$), and this \kmer is both in the left- and right-companion sets for $f$.
The homopolymers are the only such \kmers.
Every other \kmer is a left-companion for exactly one suffix and a right-companion for a different prefix.

\begin{proposition}[Existence of F-moves]\label{thm:f-moves-exist}
  In any MDS $M$, there exists $f, f' \in \Sigma^{k-1}$ such that $M$ contains the left companions of $f$ and the right companions of $f'$.
\end{proposition}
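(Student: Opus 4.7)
The plan is to reduce the proposition to the elementary fact that every non-empty finite DAG has a source and a sink, and then translate these into the companion language via the definition of the edges in $D_k$.

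First, I would observe that since $M$ is a decycling set, $D_k \setminus M$ is a DAG by definition. Moreover, $D_k \setminus M$ is non-empty: an MDS has size strictly less than $\sigma^k$ (in fact, roughly $\sigma^k/k$ by the Mykkeltveit--Golomb formula, since every MDS is a PCR set). Any non-empty finite DAG, by a standard topological-sort argument, admits at least one source $u$ (no in-neighbors in $D_k\setminus M$) and at least one sink $v$ (no out-neighbors in $D_k\setminus M$).

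Next, I would translate source/sink back into $D_k$ using the suffix-prefix definition of edges. A node $u \in \Sigma^k$ has exactly $\sigma$ in-neighbors in $D_k$, namely $\{a\cdot u[1:k-1] \mid a \in \Sigma\}$, which is precisely $\lc(f)$ for $f = u[1:k-1]$. Similarly, $v$ has out-neighbors $\{v[2:k]\cdot b \mid b \in \Sigma\} = \rc(f')$ for $f' = v[2:k]$. Since $u$ is a source in $D_k\setminus M$, all of its in-neighbors in $D_k$ must lie in $M$, giving $\lc(f) \subseteq M$; symmetrically, $\rc(f') \subseteq M$.

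There is essentially no obstacle here---the argument is purely a direct consequence of $D_k \setminus M$ being a non-empty DAG. The only minor subtlety is a sanity check involving self-loops: homopolymers $a^k$ carry a self-loop in $D_k$ and form singleton PCRs, so they are automatically in every MDS and thus absent from $D_k\setminus M$; hence no source or sink of $D_k\setminus M$ is a homopolymer, and the translation above is unambiguous. This also shows that the $f, f'$ produced need not be equal and need not be distinct---the proposition merely asserts existence.
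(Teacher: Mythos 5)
Your proof is correct and is essentially the paper's argument: the paper's proof-by-contradiction via a walk that avoids $M$ is just an inline derivation of the fact that the finite DAG $D_{k} \setminus M$ has a sink (and, walking backward, a source), which you invoke directly before making the same translation of the in- and out-neighborhoods of a source/sink into $\lc(f)$ and $\rc(f')$. Your added remarks on non-emptiness and homopolymers are correct and harmless.
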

\begin{proof}
  By contradiction, assume there is no such $f'$.
  Color all the nodes of the graph blue and do a random walk in the graph, starting from any node not in $M$, avoiding the nodes in $M$.
  Color in red the nodes traversed.
  Any \kmer $m$ is the left-companion of a suffix, say $f_{m}$, and every outgoing edge from $m$ is an incoming edge to a right-companion of $f_{m}$ (see Figure~\ref{fig:moves-companions}).
  Because no right-companion sets are in $M$, it is always possible to continue the walk avoiding $M$ from any $m$.
  Given that the graph is finite, the red nodes will eventually create a cycle, contradicting $M$ being a decycling set.
  The same reasoning applies for the existence of $f$ traversing edges in the reverse direction.
  \qed
\end{proof}

An \emph{F-move} (named after Fredricksen~\cite{fredricksen}) in $M$ for $f \in \Sigma^{k-1}$ is the operation of changing the set of left-companions of $f$ for the set of right-companions, as shown in Figure~\ref{fig:moves-companions}.
We use the functional notation $fM$ to designate the set obtained by the valid F-move $f$ from $M$: $fM \triangleq M \cup \rc(f) \setminus \lc(f)$.
This is a valid operation only when $M$ contains $\lc(f)$.
As a consequence of Proposition~\ref{thm:f-moves-exist} there always exists a valid F-move in an MDS.
The \emph{RF-move} (reverse F-move) is the inverse operation, valid when $M$ contains $\rc(f)$, $f^{r}M \triangleq M \cup \lc(f) \setminus \rc(f)$, satisfying $f^{r}fM = ff^{r}M = M$.


\begin{proposition}[F-moves preserve decycling sets]\label{thm:fmove-preserve-mds}
  Let $M$ be an MDS such that $\lc(f) \subset M$, then $fM$ is also an MDS.
\end{proposition}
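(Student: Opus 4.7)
The plan is to prove the proposition in two steps: first that $fM$ is a decycling set, and then that $|fM| = |M|$; the two together give that $fM$ is an MDS. The key structural observation driving the first step is that in $D_k$ the out-neighborhood of every node $af \in \lc(f)$ is exactly $\rc(f)$ (the nodes $fb$, $b \in \Sigma$), so any walk that enters $\lc(f)$ must leave through $\rc(f)$.

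For the first step, I would argue by contradiction. Suppose $D_k \setminus fM$ contains a cycle $C$, and split on whether $C$ visits a node of $\lc(f)$. If $C$ avoids $\lc(f)$, then since $fM \supseteq M \setminus \lc(f)$, the cycle $C$ also avoids $M \setminus \lc(f)$, hence avoids $M$ altogether, contradicting the fact that $M$ is a decycling set. If $C$ visits some $af \in \lc(f)$, its outgoing edge at $af$ must terminate at a successor in $D_k$, which by the observation above lies in $\rc(f) \subseteq fM$, contradicting $C \subseteq D_k \setminus fM$. Either way we get a contradiction, so $fM$ is a decycling set.

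For the second step, I would use the minimality of $M$: since $fM$ is decycling, $|fM| \ge |M|$. A direct inclusion-exclusion count, using $\lc(f) \subseteq M$ together with $|\lc(f)| = |\rc(f)| = \sigma$ and $\lc(f) \cap \rc(f) = \emptyset$ (the non-homopolymer case), gives $|fM| = |M| - \sigma + |\rc(f) \setminus M|$. Combined with $|fM| \ge |M|$ this forces $\rc(f) \cap M = \emptyset$ and hence $|fM| = |M|$, so $fM$ is an MDS. As a bonus, the same argument shows that $\rc(f)$ is automatically disjoint from $M$, a fact that will presumably be useful later (for instance, to show that RF-moves applied to $fM$ recover $M$).

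The main obstacle is more cosmetic than substantive: the homopolymer case $f = a^{k-1}$, in which $a^k$ lies in both $\lc(f)$ and $\rc(f)$ and carries a self-loop that any decycling set must already contain. The F-move shown in Figure~\ref{fig:moves-companions}b) preserves $a^k$ in the set and swaps only the $\sigma-1$ non-homopolymer companions; the cycle-chasing argument is unchanged (the edge out of $a^k$ going into $\rc(f) \setminus \{a^k\}$ is still blocked), and the size bookkeeping simply replaces $\sigma$ by $\sigma - 1$.
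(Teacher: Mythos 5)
Your proof is correct and the core argument (any cycle avoiding $fM$ either avoids $\lc(f)$ and hence avoids $M$, or enters $\lc(f)$ and is forced out through $\rc(f)\subset fM$) is exactly the paper's proof. The additional cardinality step, showing $|fM|=|M|$ via minimality and deducing $\rc(f)\cap M=\emptyset$ as a by-product, is a detail the paper leaves implicit rather than a different approach.
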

\begin{proof}
  If there is a cycle that avoids $fM$, then it must use one of the nodes in $\lc(f)$, otherwise it was already a cycle avoiding $M$.
  Any cycle using a node in $\lc(f)$ then must use a node in $\rc(f) \subset fM$.
  \qed
\end{proof}

An analogous statement holds for RF-moves.
F-moves give a procedure to enumerate some MDSs, starting for example from either the Mykkeltveit or Champarnaud set and repeatedly applying a (guaranteed-to-exist by Prop.~\ref{thm:f-moves-exist}) F-move.
Unfortunately, not all MDSs are reachable using only F-moves.
The \emph{MDS graph} $\gmds(\sigma, k)$ has all the MDSs as nodes and edges that represent F-moves operations between MDSs.
$\gmds$ is not connected, as seen in Figure~\ref{fig:mds-graph}, but its components have a well characterized structure (proof in Supplementary Material~\ref{sec:mds-structure}).

\begin{restatable}[$\gmds$ component structure]{proposition}{mdsgraphcomponents}\label{thm:comp-structure}
  For any $\sigma$ and $k$, the components of $\gmds(\sigma, k)$ satisfy:
  \begin{enumerate}
    \item \label{item:strongly-connected} every component is strongly connected
    \item \label{item:cycle-length} every cycle is of length $\alpha \sigma^{k-1}, \alpha \in \mathbb{N}$
    \item \label{item:fmoves-cycles} in a cycle of length $\alpha \sigma^{k-1}$, every possible F-move $f \in \Sigma^{k-1}$ occurs exactly $\alpha$ times
    \item \label{item:nodes-cycles} every node is in a cycle of length $\sigma^{k-1}$ (hence the girth is $\sigma^{k-1}$)
    \item \label{item:thickle} each component is a $\sigma^{k-1}$-partite directed graph
   \end{enumerate}
\end{restatable}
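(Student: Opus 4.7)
The plan is to tackle the five claims in the order (2)--(3), (4), (1), (5), since each rests on its predecessors.

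\textbf{Properties (2) and (3).} I use a conservation argument. Along any closed directed walk in $\gmds$, let $n_f$ count the number of times the F-move $f$ is applied. For each \kmer $m = m_1\cdots m_k$, the unique F-move that can add $m$ is $A(m) := m_1\cdots m_{k-1}$ (since $m \in \rc(A(m))$), and the unique F-move that can remove $m$ is $S(m) := m_2\cdots m_k$ (since $m \in \lc(S(m))$); for homopolymers $A(m) = S(m)$ and the move has no net effect on $m$. Tracking $m$'s membership along the closed walk, additions must balance removals, giving $n_{A(m)} = n_{S(m)}$ for every \kmer. Each such equation lives on an edge of the de~Bruijn graph $D_{k-1}$ on $(k-1)$-mers, and since $D_{k-1}$ is strongly connected, all $n_f$ equal a common value $\alpha \in \mathbb{N}$. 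The walk length is therefore $\sum_f n_f = \alpha\,\sigma^{k-1}$.

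\textbf{Property (4).} The key observation is that applying every F-move exactly once produces the identity on \kmer content: each $m$ is added once (by $A(m)$) and removed once (by $S(m)$), so the net change is zero regardless of order. Hence it suffices to exhibit \emph{one} valid ordering $\pi$ of $\Sigma^{k-1}$ starting from $M$. In the ``each-used-once'' regime, the state $M[F]$ after the prefix $F\subseteq\Sigma^{k-1}$ of applied moves satisfies $m\in M[F] \iff (m\in M)\oplus(A(m)\in F)\oplus(S(m)\in F)$. Unpacking validity of $f$ at $M[F]$ ($\lc(f)\subset M[F]$, together with $\rc(f)\cap M[F]=\emptyset$, which holds by size-preservation of F-moves on MDSs) yields exactly one constraint per non-homopolymer \kmer $m$: $\pi(S(m)) < \pi(A(m))$ if $m\in M$, and $\pi(A(m))<\pi(S(m))$ otherwise. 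The precedence digraph $\widetilde D_{k-1}$ on $\Sigma^{k-1}$ is thus $D_{k-1}$ with the edges indexed by \kmers in $M$ reversed; any topological sort of $\widetilde D_{k-1}$ supplies the desired ordering, reducing (4) to the assertion that $\widetilde D_{k-1}$ is acyclic.

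\textbf{Properties (1) and (5).} Part (1) follows from (4): given an F-move edge $M_1\to M_2$, the cycle of length $\sigma^{k-1}$ through $M_1$ obtained from (4) with first move $M_1\to M_2$ provides a directed path $M_2\to\cdots\to M_1$. Applied edge-by-edge along any undirected path within a weakly connected component, this upgrades it to a directed path, giving strong connectivity. For (5), fix a base MDS $M_0$ in each component and set $\chi(M)\in\mathbb{Z}/\sigma^{k-1}$ to be the length of any directed F-move path $M_0\to M$, reduced modulo $\sigma^{k-1}$. Existence comes from (1); well-definedness from (2), since any two such paths differ by a closed walk whose length is divisible by $\sigma^{k-1}$. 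Each F-move increments $\chi$ by $1$, so the fibers $\chi^{-1}(0),\ldots,\chi^{-1}(\sigma^{k-1}-1)$ give the $\sigma^{k-1}$-partition with every directed edge going from class $i$ to class $i+1$.

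\textbf{Main obstacle.} Acyclicity of $\widetilde D_{k-1}$ in (4) is the hard step. A ``forward-only'' cycle (all edges inherited from $D_{k-1}$, all corresponding \kmers outside $M$) lifts through the line-graph identification $D_k = L(D_{k-1})$ to a directed cycle in $D_k\setminus M$, contradicting that $M$ decycles $D_k$. The subtle case is cycles that mix original and reversed edges, corresponding to closed walks in $D_{k-1}$ that traverse some $M$-edges backwards; ruling these out requires the \emph{minimality} of $M$, not merely its decycling property. My strategy is to show that any such mixed cycle would expose a \kmer $m\in M$ whose role could be played by a rotational cousin in the same PCR via an F-move-like local swap, producing a strictly smaller decycling set and contradicting minimum cardinality. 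The automatic complement condition $\rc(f)\cap M=\emptyset$ for valid F-moves is the likely lever to make this reduction go through.
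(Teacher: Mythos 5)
Your conservation argument for points~\ref{item:cycle-length} and~\ref{item:fmoves-cycles} is correct and arguably cleaner than the paper's: the paper reaches the same conclusion by viewing each MDS as one ``pebble'' per PCR and arguing that F-moves act as semaphores forcing all pebbles to complete the same number of revolutions, whereas you get $n_{A(m)}=n_{S(m)}$ directly from add/remove balance and propagate equality over the strongly connected $D_{k-1}$. Your reduction of point~\ref{item:nodes-cycles} to a topological sort of $\widetilde D_{k-1}$ (with the $M$-edges reversed) is also a correct reformulation, and the derivation of the precedence constraints is sound. The problem is that you never prove the one fact everything else now hangs on: acyclicity of $\widetilde D_{k-1}$. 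You explicitly flag it as the ``main obstacle,'' dispose of the all-forward case (which is just the decycling property), and for the mixed case offer only a strategy (``expose a \kmer whose role could be played by a rotational cousin \ldots producing a strictly smaller decycling set'') with no argument that such a swap exists or that it preserves the decycling property. Since your proofs of points~\ref{item:strongly-connected} and~\ref{item:thickle} are built on top of point~\ref{item:nodes-cycles} (and point~\ref{item:strongly-connected} additionally needs the stronger statement that every \emph{edge} lies on a $\sigma^{k-1}$-cycle, which you do justify modulo acyclicity), three of the five claims remain unproven. This is a genuine gap, not a routine verification left to the reader: the mixed-orientation cycles are exactly where the combinatorial content lives.

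For contrast, the paper avoids this obstacle entirely by first establishing a commutativity lemma (two F-moves valid in the same MDS act on disjoint companion sets, hence commute), which you never invoke. Existence of \emph{some} cycle through each component is free (finiteness plus Proposition~\ref{thm:f-moves-exist}); strong connectivity then follows by commuting a neighboring F-move to the front of that cycle; and a cycle of length exactly $\sigma^{k-1}$ through a given node is obtained by repeatedly swapping adjacent commuting F-moves in a cycle of length $\alpha\sigma^{k-1}$ until the first $\sigma^{k-1}$ moves are distinct. If you want to salvage your route, the most promising fix is to import that commutativity lemma: it gives you point~\ref{item:strongly-connected} and point~\ref{item:nodes-cycles} without ever needing $\widetilde D_{k-1}$ to be acyclic (acyclicity then becomes a corollary rather than a prerequisite). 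Your proof of point~\ref{item:thickle} from points~\ref{item:strongly-connected} and~\ref{item:cycle-length} is fine and matches the paper's in substance.
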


\begin{figure}
  \centering
  \includegraphics{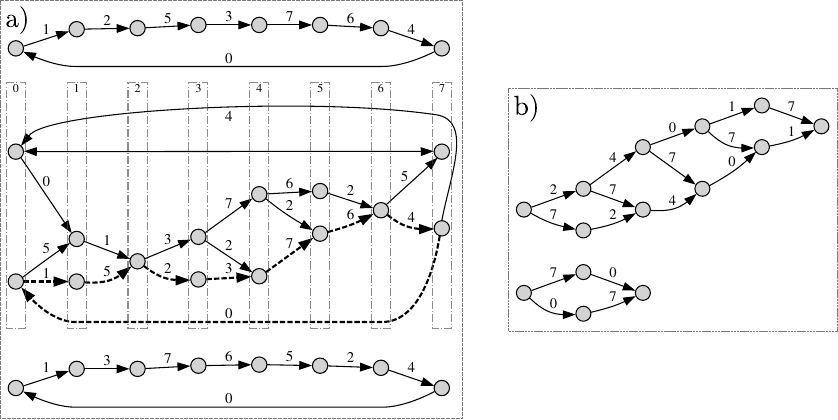}
  \caption{a)~MDS graph $\gmds(2, 4)$ with edge labels as numbers in $[0, \sigma^{k-1}]$ representing the F-moves.
    There are 3 components.
    Each component is strongly connected and can be partitioned into $\sigma^{k-1}=8$ layers with edges only from one layer to the next.
    The gray vertical boxes in the middle component highlight the layers, numbered from $0$ to $\sigma^{k-1}$.
    Each layer in the middle component has size 1 or 2.
    An example of a cycle of length $8$ with every F-move done exactly once is highlighted with dashed edges.
    b)~Example of $2$ components of non-decycling PCR sets.
    The components are DAGs with a longest path less than $8$ edges.
  }
  \label{fig:mds-graph}
\end{figure}

\subsection{I-moves}

An I-move, as in an ``incomplete F-move'', is valid when $M$ contains a mixture of left- and right-companions: for some $f \in \Sigma^{k-1}$ and $\forall a \in \Sigma$, either $af$ or $fa$ is in $M$.
See Figure~\ref{fig:moves-companions} for an example.
For a given $f \in \Sigma^{k-1}$, there are $2^{\sigma}-2$ distinct I-moves: one for each possible choice of left-companions nodes in $M$, excluding the F-move (all of $\lc(f)$) and the RF-move (none of $\lc(f)$).
There is one exception: when $f = a^{k-1}$ is a homopolymer, $af = fa$ is both in $\lc(f)$ and $\rc(f)$ and the number of possible I-moves for $f$ is $2^{\sigma-1}-2$.

An I-move is denoted by $\im{f}{m}$ where $m \in [1, 2^{\sigma}-2]$ is interpreted as a bit-mask giving the nodes from $\lc(f)$ (i.e., the $a$th bit $m_{a} = 1$ iff $af \in M$ and $m_{b} = 0$ iff $fb \in M$).
With this notation, the F-move $f$ is equivalent to $\im{f}{0}$ while the RF-move is $\im{f}{2^{\sigma}-1}$.
An identical argument as for Proposition~\ref{thm:fmove-preserve-mds} shows that applying a valid I-move to an MDS also gives an MDS

Although F-moves and I-moves seem like similar operations and both preserve MDSs, they have distinct effect on MDSs.
First, empirically we observe that I-moves, unlike F-moves, are not always possible.
MDSs always have a valid F-move (Proposition~\ref{thm:f-moves-exist}), while an MDS may not have any valid I-move.
All of the $\sigma^{k-1}$ F-moves are an edge in every component of the MDS graph, while not all of the $\sigma^{k-1} \cdot (2^{\sigma}-2)$ I-moves are valid in at least one MDS of the entire MDS graph.
In particular, no MDS for $\sigma=2$ and $k=5$ have any valid I-move.



Second, F-moves not only preserve the decycling property of MDSs, but they also preserve the ``coverage'' of every cycle by an MDS.
To make this notion precise, define the \emph{hitting number} of a cycle $C$ of $D_{k}$ by the MDS $M$ as the size of their intersection: $\hit_M(C) = |M \cap C|$.
Because $M$ is a decycling set, necessarily $\hit_{M}(C) \ge 1$.
PCRs for example have a hitting number of $1$ while any Hamiltonian cycle has a hitting number equal to $|M|$.

Furthermore, the \emph{cycle signature} of MDS $M$ is the vector of all hitting numbers for all possible cycles: $\sig(M) = \bigl\langle\hit_{M}(C)\bigr\rangle_{C \textrm{ cycle of } D_{k}}$.
Per the following proposition, F-moves preserve hitting numbers and signatures, while I-moves do not.

\begin{restatable}{proposition}{signatures}
  \begin{enumerate}
    \item\label{item:hit-fmove} Let $M$ be an MDS and $f$ a valid F-move in $M$, then for any cycle $C$, $\hit_{M}(C) = \hit_{fM}(C)$
    \item\label{item:hit-imove} For every valid I-move $\im{f}{m}$ in MDS $M$, there exists a cycle $C$ of $D_{k}$ such that $\hit_{M}(C) \ne \hit_{\im{f}{m}M}(C)$
    \item\label{item:sig-same} For any MDSes $M_{1}, M_{2}$ from the same component of $\gmds$, $\sig(M_{1}) = \sig(M_{2})$
    \item\label{item:sig-diff} For any MDSes $M_{1}, M_{2}$ from different components of \gmds, $\sig(M_{1}) \ne \sig(M_{2})$
  \end{enumerate}
\end{restatable}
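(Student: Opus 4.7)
The plan is to prove the four parts in order, leveraging the bipartite structure of left- and right-companions.

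For part (\ref{item:hit-fmove}), I would first establish the following \emph{bipartite balance lemma}: for any simple cycle $C$ in $D_{k}$ and any $f \in \Sigma^{k-1}$, $|C \cap \lc(f)| = |C \cap \rc(f)|$. The reason is that the successor of every node in $\lc(f)$ lies in $\rc(f)$, and the predecessor of every node in $\rc(f)$ lies in $\lc(f)$; hence each visit of $C$ to $\lc(f)$ is immediately followed by a visit to $\rc(f)$, and these visits pair up bijectively. An F-move $f$ replaces $\lc(f) \subset M$ by $\rc(f)$, and $\rc(f) \cap M = \emptyset$ before the move (by the PCR-set property of MDSs), so the net change in $\hit_{M}(C)$ is $|C \cap \rc(f)| - |C \cap \lc(f)| = 0$. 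The homopolymer case $f = a^{k-1}$ only requires a small bookkeeping adjustment since $a^{k} \in \lc(f) \cap \rc(f)$. Part (\ref{item:sig-same}) then follows immediately by transitivity along any F-move path within a component.

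For part (\ref{item:hit-imove}), parameterize the state of $M$ near $f$ by the proper nonempty subset $A \subsetneq \Sigma$ such that $af \in M$ iff $a \in A$, so that the I-move swaps the representatives of each affected PCR-pair. A short computation, splitting the balance lemma into contributions from $A$ and $\Sigma \setminus A$, shows that the change in $\hit_{M}(C)$ under this I-move equals $2(R_{A}(C) - L_{A}(C))$, where $L_{A}(C) = |C \cap \{af : a \in A\}|$ and $R_{A}(C) = |C \cap \{fa : a \in A\}|$. It therefore suffices to construct a simple cycle $C$ with $L_{A}(C) \ne R_{A}(C)$. Picking any $a \in A$ and $b \notin A$, I would build a simple cycle using the edge $af \to fb$ while avoiding the remaining $2\sigma - 2$ companions in $\lc(f) \cup \rc(f)$; for such a cycle, $L_{A}(C) = 1$ and $R_{A}(C) = 0$, yielding a change of $-2$. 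Existence reduces to finding a simple path from $fb$ back to $af$ in $D_{k}$ with $O(\sigma)$ vertices removed, which follows from standard strong-connectivity arguments on $D_{k}$.

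Part (\ref{item:sig-diff}) is the main obstacle. Phrased contrapositively, it asserts that the signature is a complete invariant of the F-move component: $\sig(M_{1}) = \sig(M_{2})$ must force $M_{1}$ and $M_{2}$ to lie in the same component. My plan is a contradiction argument: assume $M_{1}$ and $M_{2}$ lie in different components yet share a signature, and analyze their symmetric difference $M_{1} \SymDiff M_{2}$. Since both are PCR sets, $M_{1} \SymDiff M_{2}$ is a disjoint union of PCR-pairs at which the representatives differ, and these pairs cluster naturally around various suffixes $f$ into configurations resembling those produced by chains of I-moves. The key step is then to generalize the cycle construction from part (\ref{item:hit-imove}) into a global witness: for any such nontrivial pattern of differences, exhibit at least one cycle whose hit count differs between $M_{1}$ and $M_{2}$. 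The chief difficulty, and where I expect the argument to require the most care, is ruling out cancellation across differences at distinct suffixes; this likely requires invoking the $\sigma^{k-1}$-partite structure of Proposition~\ref{thm:comp-structure} to encode enough independent invariants in the signature to separate all components.
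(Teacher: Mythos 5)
Parts~\ref{item:hit-fmove} and~\ref{item:sig-same} of your proposal coincide with the paper's argument (pair each visit of $C$ to $\lc(f)$ with the immediately following visit to $\rc(f)$, then propagate along F-move paths), and your part~\ref{item:hit-imove} uses the same cycle-through-the-edge-$af \rightarrow fb$ construction as the paper. Two slips there are worth flagging. First, the factor $2(R_A(C)-L_A(C))$ indicates you are reading the I-move as complementing the whole companion configuration; in fact $\im{f}{m}$ only advances the pebbles $af$ with $m_a=1$ to $fa$ and leaves the right-companions $fb$ with $m_b=0$ in place (so that $\rc(f)$ is entirely contained in the new set --- this is what makes the Proposition~\ref{thm:fmove-preserve-mds} argument carry over), and the change in the hitting number is $R_A(C)-L_A(C)$ with no factor of $2$. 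Second, routing a path that avoids all $2\sigma-2$ remaining companions asks for more than the $(\sigma-1)$-vertex-connectivity of $D_k$ provides; for $\sigma=2$ you would be deleting two vertices from a $2$-regular digraph. Neither slip kills the existence claim, since a nonzero imbalance at the single edge $af\rightarrow fb$ suffices.

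The genuine gap is part~\ref{item:sig-diff}, and your proposal stops exactly where the work begins: you concede that you cannot rule out cancellation across differences at distinct suffixes, and the preceding setup does not make that step easier. The symmetric difference $M_1 \SymDiff M_2$ of two PCR sets is merely a union of pairs of \kmers lying at arbitrary positions on the same PCR; there is no reason for it to organize into I-move-like local configurations, because MDSs in different components need not be related by any short chain of I-moves. The paper avoids all global bookkeeping with a normalization you are missing. Since by part~\ref{item:sig-same} the signature is a component invariant, one may replace each $M_i$ by \emph{any} representative of its component; the paper shows (Lemma~\ref{thm:exist-f-terminal}) that every component contains an \emph{$f$-terminal} MDS, i.e., one whose only valid F-move is $f$, obtained by greedily exhausting every F-move except $f$. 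Taking both $M_1$ and $M_2$ to be $f$-terminal forces every maximal backward walk avoiding $M_2$ to terminate in $\rc(f)$. One then walks backward from $0f$ toward the first vertex $m$ of $M_1 \SymDiff M_2$ (say $m\in M_1$) while avoiding the other left-companions of $f$, continues backward from $m$ avoiding $M_2$ until landing in $\rc(f)$, and closes up a single cycle $C$ with $\hit_{M_1}(C) > \hit_{M_2}(C)$. Without this (or some equivalent) device for producing one explicit witness cycle, your part~\ref{item:sig-diff} is a restatement of the difficulty rather than a proof.
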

\begin{proof}
  Let $f$ be a valid F-move in MDS $M$, and $C$ be a cycle of $D_{k}$.
  Because every outgoing edge of a node in $\lc(f)$ is an incoming edge to a node in $\rc(f)$, $C$ must contain as many nodes from $\lc(f)$ as from $\rc(f)$ (which can be $0$).
  Before the F-move, all the nodes from $\lc(f)$ and none from $\rc(f)$ are in $M$, while the opposite is true for $fM$.
  Hence the hitting number is unaffected by the F-move, proving~\ref{item:hit-fmove}.

  Let $\im{f}{m}$ be a valid I-move in $M$ such that $af \in \lc(f)$ and $fb \in \rc(f)$, $a, b \in \Sigma$.
  Because $D_{k}$ is $(\sigma-1)$-vertex-connected~\cite{connectivity_dbg}, there exists a path $P$ from $fb$ to $af$ that avoids $cf, c \in \Sigma \setminus \{a\}$.
  Path $P$ followed by edge $af \rightarrow fb$ form a cycle $C$ such that $\hit_{M}(C) = \hit_{fM}(C) + 1$ ($af$ is in $M$ but not in $fM$).
  By the same construction, there exists a ``complementary'' cycle $C'$ using $bf$ and $fa$ such that $\hit_{M}(C') = \hit_{fM}(C') - 1$. 
  This proves~\ref{item:hit-imove}.

  As a component of \gmds is strongly connected by F-moves, statement~\ref{item:sig-same} is a direct consequence of~\ref{item:hit-fmove}.
  A proof for~\ref{item:sig-diff} is given in Supplementary Material~\ref{sec:cycle-sign-uniq}.
  \qed
\end{proof}

As a consequence of this proposition, the hitting number and signature are constant over a component of the MDS graph, and the hitting number $\hit_{\chi}(C)$ and the signature $\sig(\chi)$ are well defined for a component $\chi$.
Because an I-move changes the signature, every I-move links MDSs from different components.
Consider now the \emph{component graph} $\gcomp(\sigma, k)$ with one node for each component of \gmds and a directed edge from component $\chi_{1} \rightarrow \chi_{2}$ if there is an I-move from an MDS $M_{1} \in \chi_{1}$ to $M_{2} \in \chi_{2}$.
In fact, as stated in the following Proposition, \gcomp is an undirected graph (proof in Supplementary Material~\ref{sec:gcomp-undirected}).

\begin{restatable}[$\gcomp$ is undirected]{proposition}{complementimove}
  Let $\im{f}{m}$ be a valid I-move from MDS $M_{1}$ in component $\chi_{1}$ to $M_{2}$ in $\chi_{2}$.
  Then there exists $M'_{2}, M'_{1}$ in $\chi_{2}, \chi_{1}$, respectively, such that $\im{f}{\overline{m}}$ (where $\overline{m}$ is the bit-complement of $m$) is a valid I-move from $M'_{2}$ to $M'_{1}$.
\end{restatable}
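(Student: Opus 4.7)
The plan is to show that $\im{f}{\overline{m}}$ is literally the inverse operation of $\im{f}{m}$, so the witnesses can be taken to be $M'_2 = M_2$ and $M'_1 = M_1$ themselves. The whole statement then reduces to (a)~checking that an I-move at $f$ is well-defined on $M_2$ and (b)~that the ``complement move'' undoes the original swap.

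For (a), I would use that $m \in [1, 2^{\sigma}-2]$ excludes the F- and RF-move labels, so the mask is neither all-zero nor all-one. Because $\im{f}{m}$ produced $M_2$ by swapping, for each $a \in \Sigma$ exactly one of $af, fa$ lies in $M_2$, with both cases occurring. Hence $M_2$ contains a strict mixture of $\lc(f)$ and $\rc(f)$, which is precisely the precondition for an I-move at $f$.

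For (b), I would unpack the definition of the swap. The I-move $\im{f}{m}$ only modifies the $2\sigma$ (or $2\sigma-1$, in the homopolymer case) companions of $f$; the rest of the set is untouched. Writing out the correspondence, each $a \in \Sigma$ with $af \in M_1$ satisfies $fa \in M_2$, and each $a$ with $fa \in M_1$ satisfies $af \in M_2$. The bit-mask of the move labels which side each companion sits on, and the ``complement'' mask $\overline{m}$ encodes exactly the opposite side choices. Applying $\im{f}{\overline{m}}$ to $M_2$ therefore swaps every companion of $f$ back to its $M_1$-side, producing $M_1$ verbatim. Since $M_1 \in \chi_1$ and $M_2 \in \chi_2$ by hypothesis, we obtain a valid I-move from $\chi_2$ to $\chi_1$, which is the edge needed to make $\gcomp$ undirected.

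The only real obstacle is notational: one must pin down the bit-mask convention (whether $m$ describes the ``before'' or ``after'' configuration at $f$) and verify that ``bit-complement'' really maps the forward I-move to its inverse in that convention. In either reading, a companion is swapped exactly once per I-move, and $\overline{\overline{m}} = m$ makes the two moves mutually inverse, so this bookkeeping is routine rather than a technical difficulty. No appeal to connectivity of $D_k$ or to Proposition~\ref{thm:fmove-preserve-mds} is needed beyond the bare fact that $M_1$ is already known to be an MDS.
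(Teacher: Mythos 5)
Your proposal rests on a misreading of what an I-move does, and this sinks both halves of the argument. An I-move, like an F-move, only advances set elements \emph{forward} along de~Bruijn edges: $\im{f}{m}$ replaces each left-companion $af \in M_{1}$ (the positions with $m_{a}=1$) by the corresponding right-companion $fa$, while the right-companions $fb$ already in $M_{1}$ stay where they are. Consequently $M_{2} = \im{f}{m}M_{1}$ contains \emph{all} of $\rc(f)$ and \emph{none} of $\lc(f)$ --- the paper's own proof states this explicitly (``to get to $M_{2}$ \ldots where $\rc(f) \subset M_{2}$''), and Figure~\ref{fig:moves-companions} shows the untouched pebble staying still. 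So $M_{2}$ is not in a ``strict mixture'' configuration at $f$; the only move available at $f$ in $M_{2}$ is the RF-move, and in particular $\im{f}{\overline{m}}$ is \emph{not} valid in $M_{2}$. Your step~(a) therefore fails, and step~(b) fails with it: $\im{f}{\overline{m}}$ is not the inverse of $\im{f}{m}$ --- it would advance the \emph{other} subset of left-companions forward, not push anything back. This is exactly why the proposition is phrased with an existential quantifier over $M'_{2}, M'_{1}$ rather than asserting an edge between $M_{2}$ and $M_{1}$ themselves; under your reading the existential would be pointless.

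The paper's proof has to do genuine work to manufacture $M'_{2}$ and $M'_{1}$. It invokes the component-structure proposition to obtain a cycle $C_{1}$ of length $\sigma^{k-1}$ through $M_{1}$ in $\chi_{1}$ in which $f$ occurs exactly once, splits it as $(F_{\overline{m}}, f, F_{m})$, and argues via the pebble/semaphore picture that the sub-list $F_{\overline{m}}$ only touches the PCRs associated with the $0$-bits of $m$, hence remains a valid list of F-moves when applied to $M_{2}$. Setting $M'_{2} = F_{\overline{m}}M_{2}$ puts the companions of $f$ into exactly the complementary mixture, so $\im{f}{\overline{m}}$ is valid there and lands in $\chi_{1}$. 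None of this is routine bookkeeping about mask conventions; a repaired proof needs this (or an equivalent) construction, together with the commutativity of disjoint F-moves, neither of which appears in your proposal.
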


\subsection{Enumerating all MDSs}

We make the following two conjectures regarding the use of I-moves to enumerate all MDSs.

\begin{conjecture}[Connectivity by I-moves]\label{conj:imove-conn}
  The \gcomp graph is connected.
  Equivalently, every MDSs is reachable from the Mykkeltveit MDS using a sequence of F-moves and I-moves.
\end{conjecture}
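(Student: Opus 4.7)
The plan is to prove connectivity of $\gcomp$ by showing that from any MDS $M$ we can reach the Mykkeltveit MDS $M_{\text{Myk}}$ by a finite sequence of F-moves and I-moves. A preliminary observation that guides the approach is that both move types act \emph{inside} PCRs: an F-move at $f$ replaces each $af \in \lc(f)$ by its cyclic rotation $fa \in \rc(f)$, which is a conjugate and so lies in the same PCR, and an I-move does the same swap on a subset of the indices $a \in \Sigma$. Since every MDS is a PCR set, the transformation problem reduces to rotating representatives within PCRs under the constraint that every intermediate set remain decycling.

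The natural plan is to introduce a potential $\Phi(M) = \lVert \sig(M) - \sig(M_{\text{Myk}}) \rVert_{1}$ and find moves that decrease it. F-moves preserve the signature and hence $\Phi$, so they provide free navigation within a component. I-moves change the signature in a very controlled way: as shown in the proof of the signatures proposition, a valid I-move flips the hitting numbers of exactly one pair of complementary cycles by $+1$ and $-1$, altering $\Phi$ by at most $2$. The Mykkeltveit signature is plausibly extremal (Mykkeltveit representatives minimize a specific rotation invariant on each PCR), so $\Phi(M_{\text{Myk}}) = 0$ should be the unique minimum of $\Phi$ across components, and the goal becomes showing that every non-Mykkeltveit component admits a valid I-move strictly decreasing $\Phi$. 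Combined with the undirected-$\gcomp$ proposition, which supplies an inverse for each I-move, such a descent would yield connectivity.

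The main obstacle, and almost certainly the reason the statement is left as a conjecture, is establishing the existence of a $\Phi$-decreasing I-move from every non-Mykkeltveit component. An I-move requires some $f \in \Sigma^{k-1}$ with both $\lc(f) \cap M$ and $\rc(f) \cap M$ non-trivial, and as the authors note, some components (e.g.\ for $\sigma = 2$, $k = 5$) contain MDSs with no valid I-move at all. For such $M$ one must first perform F-moves inside the component to uncover a mixed suffix $f$, and then verify that among the $2^{\sigma}-2$ bit-masks some choice of $m$ yields an $\im{f}{m}$ that reduces $\Phi$. Showing that a suitable $(f, m)$ exists in every non-Mykkeltveit component seems to require a global argument tying the I-move flexibility of a component to its signature deviation from $\sig(M_{\text{Myk}})$; local de~Bruijn-graph manipulations do not obviously suffice. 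A successful proof likely needs a new structural invariant, or a bijection between components and some combinatorial object on which the Mykkeltveit signature is a canonical extremum, which is presumably why the authors support the conjecture by enumeration rather than a combinatorial proof.
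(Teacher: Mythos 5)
The statement you are addressing is a \emph{conjecture}: the paper offers no proof of it, only computational verification (exhaustive enumeration of MDSs for $\sigma=2$, $k\le 7$) and circumstantial theoretical support (distinct components have distinct cycle signatures, and every I-move changes the signature). Your proposal is likewise not a proof --- you candidly identify in your last paragraph that the crucial step (every non-Mykkeltveit component admits a $\Phi$-decreasing I-move) is missing --- so there is no disagreement about the overall status. However, one intermediate claim you make is wrong and would sink the descent strategy even as a heuristic. You assert that a valid I-move ``flips the hitting numbers of exactly one pair of complementary cycles by $+1$ and $-1$, altering $\Phi$ by at most $2$.'' The signatures proposition only asserts the \emph{existence} of one cycle $C$ with $\hit_{M}(C)\ne\hit_{\im{f}{m}M}(C)$ (and a complementary one changing in the other direction); it does not say these are the only cycles affected. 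An I-move $\im{f}{m}$ replaces $af$ by $fa$ for every $a$ with $m_{a}=1$, so \emph{every} cycle of $D_{k}$ that passes through $af$ but not $fa$ (or vice versa) changes its hitting number, and there are in general exponentially many such cycles. Consequently $\Phi(M)=\lVert\sig(M)-\sig(M_{\text{Myk}})\rVert_{1}$ can jump by an exponentially large amount under a single I-move, and the ``controlled descent'' picture does not hold.

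It is also worth noting that the authors themselves propose a different guide for navigating $\gcomp$: the I-move signature $\sigim(\chi)$ of Conjecture~\ref{conj:imove-signature} rather than the cycle signature. Since $\sigim(\chi_{1})\ne\sigim(\chi_{2})$ for distinct components (conjecturally), one can always find an I-move in the symmetric difference and apply it; but, as they observe, this fails as a proof for exactly the reason your approach does --- nothing guarantees that the chosen move brings the current component monotonically closer to the target. Both potentials (yours on cycle signatures, theirs on I-move lists) lack a monotonicity lemma, and supplying one appears to require a new structural invariant, which is why the statement remains a conjecture.
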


This conjecture is supported by the previous theoretical results, in particular that all the components have a different signature and that the I-move always change the signatures.
For reasonable values of $k$ ($\sigma=2$, $k \le 7$), it is computationally feasible to enumerate all PCR sets and check which of them are also decycling sets.
Using this brute force method we can confirm that $\gcomp(2, k)$ is connected up to $k=7$.

The following conjecture is also verified up to $k=7$ and exposes another fundamental difference between F-moves and I-moves.
Every F-move is always valid in every component, while the valid I-moves identify a component (similarly to the cycle signature).
For a component $\chi$, let the list of I-moves be $\sigim(\chi) = \{ \im{f}{m} \mid \exists M \in \chi \text{ where } \im{f}{m} \textrm{ is a valid I-move in } M \}$.

\begin{conjecture}[I-move signature]\label{conj:imove-signature}
  Every component in $\gmds$ has a distinct list of valid I-moves.
\end{conjecture}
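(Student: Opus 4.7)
The plan is to prove the conjecture by reducing it to item~\ref{item:sig-diff} (distinct components of $\gmds$ have distinct cycle signatures): if I can show that $\sigim(\chi)$ determines the cycle signature $\sig(\chi)$, then two components with the same I-move signature must share a cycle signature and therefore coincide. The bridge is already visible in the proof of item~\ref{item:hit-imove}: applying $\im{f}{m}$ produces a \emph{universal} cycle-signature increment $\Delta(f,m)$, in the sense that for every cycle $C$ of $D_{k}$, $\hit_{\im{f}{m}M}(C) - \hit_M(C)$ depends only on $(f,m)$ and on how $C$ threads the $K_{\sigma,\sigma}$ subgraph at $f$, not on the specific $M$ on which the move is performed.

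My first step would be to make $\Delta(f,m)$ explicit by classifying, for every cycle $C$, how its edges $af\to fb$ cross the $K_{\sigma,\sigma}$ at $f$ in a way compatible with the mask $m$; the paired-cycle construction used in the proof of item~\ref{item:hit-imove} is the natural starting point and extends to an inventory over all cycles. Second, assuming Conjecture~\ref{conj:imove-conn}, I would write $\sig(\chi) = \sig(\chi_{0}) + \sum_{i} \Delta(f_{i},m_{i})$ along any I-move path from a reference component $\chi_{0}$ (say the Mykkeltveit one) to $\chi$, so that the cycle signatures of all components lie on a fixed affine integer lattice generated by the $\Delta(f,m)$. Finally I would seek an invariant, computable from $\sigim(\chi)$ alone, that pins down the position of $\chi$ on this lattice---a natural candidate being the set of admissible translations $\{\Delta(f,m) : \im{f}{m}\in\sigim(\chi)\}$ together with the involution $\im{f}{m}\leftrightarrow\im{f}{\overline{m}}$ supplied by the ``$\gcomp$ is undirected'' proposition, which enforces a strong symmetry constraint on the realizable neighborhoods.

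The main obstacle is bridging the \emph{local} invariant $\sigim(\chi)$ (which I-moves can start at $\chi$) and the \emph{global} cycle signature. Two a priori plausible failure modes must be excluded: the $\Delta(f,m)$ may be linearly dependent, so that distinct I-move multisets shift the signature identically and weaken any naive reconstruction; and two components at different lattice positions could in principle share the same outgoing I-move set, since validity of $\im{f}{m}$ at $\chi$ is merely a membership condition on some MDS in $\chi$ and does not obviously record the cumulative shift from $\chi_{0}$. Ruling out the latter seems to require a finer structural analysis of how PCR sets cover long cycles of $D_{k}$, and very likely needs Conjecture~\ref{conj:imove-conn} as input. In parallel with the theoretical attack I would push the exhaustive verification past $k=7$ using the enumeration machinery developed earlier in the paper, both as independent evidence and as a potential source of counterexamples that would sharpen or refute the conjecture.
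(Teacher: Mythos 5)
First, a point of status: the statement you are trying to prove is presented in the paper as Conjecture~\ref{conj:imove-signature}, not as a theorem. The paper offers no proof of it --- only exhaustive computational verification for $\sigma=2$, $k\le 7$, and the informal remark that its validity is ``likely related'' to Conjecture~\ref{conj:imove-conn}. So there is no paper proof to match your attempt against, and your proposal, which is an explicitly incomplete research plan, is at least honest about that.

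On the substance of your plan: the one piece that is genuinely sound is the ``universal increment'' observation. For a fixed cycle $C$, the change $\hit_{\im{f}{m}M}(C)-\hit_{M}(C)$ equals $|C\cap\{fa : m_a=1\}| - |C\cap\{af : m_a=1\}|$, which indeed depends only on $(f,m)$ and $C$ and not on which $M\in\chi$ the move is applied to; this is consistent with the hitting-number bookkeeping in the paper's proof of item~\ref{item:hit-imove}. The genuine gap is the final step, which you name but cannot close: recovering $\sig(\chi)$ from $\sigim(\chi)$. The paper's own Supplementary result (Section~\ref{sec:i-move-constrained}) shows that $\im{f}{m}$ is invalid throughout $\chi$ exactly when some edge $af\to fb$ with $m_a=1$, $m_b=0$ lies on a \emph{constrained} cycle, i.e.\ a cycle with hitting number exactly $1$. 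Hence $\sigim(\chi)$ is a function of only the hitting-number-$1$ part of the signature, restricted further to which companion cross-edges those cycles traverse. Your reduction therefore requires showing that this very coarse local datum determines the full hitting vector over \emph{all} cycles --- nothing in the paper supports that, and it is precisely the content of the conjecture restated in different clothing, not a reduction to something known. Two additional concerns: (i) your lattice argument assumes Conjecture~\ref{conj:imove-conn}, so even if completed it would only prove Conjecture~\ref{conj:imove-signature} conditionally, whereas the paper uses Conjecture~\ref{conj:imove-signature} as a heuristic \emph{toward} Conjecture~\ref{conj:imove-conn}, so you should be careful that the conditional result is still useful; (ii) the increments $\Delta(f,m)$ are certainly linearly dependent (e.g.\ the F-move increment $\Delta(f,0)$ is the zero vector by item~\ref{item:hit-fmove}, and $\Delta(f,m)+\Delta(f,\overline{m})$ telescopes against the full F-move), so the ``position on the lattice'' is not determined by a multiset of moves without controlling these relations. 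In short: the plan identifies the right objects but the decisive step is missing, and the statement remains a conjecture.
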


The validity of this second conjecture is likely related to the previous one.
To prove Conjecture~\ref{conj:imove-conn}, one needs to show that for any two components $\chi_{1}, \chi_{{2}}$ there is a path of I-moves to go from $\chi_{1}$ to $\chi_{2}$.
Conjecture~\ref{conj:imove-signature} can be used as a guide to find that path: because $\sigim(\chi_{1}) \ne \sigim(\chi_{2})$, then there exists a valid I-move in either $\sigim(\chi_{1}) \setminus \sigim(\chi_{2})$ or $\sigim(\chi_{2}) \setminus \sigim(\chi_{1})$.
(Note that it is possible to have, for example, $\sigim(\chi_{1}) \subset \sigim(\chi_2)$.)
Do that I-move and repeat with the new components.
Although in our testing Conjecture~\ref{conj:imove-signature} is useful to find a path from $\chi_{1}$ to $\chi_{2}$, it is not sufficient as it does not guarantee that the size of the difference between the I-move lists is decreasing.

To create Table~\ref{tab:number-comps} we use both conjectures: one to traverse the graph and the other to avoid enumerating a component more than once.

\begin{table}
  \caption{\gcomp and \gmds properties for $\sigma=2$.
    ``Layer range'' gives, when possible, the range of the number of MDSs in each layer of \gmds.
    The numbers for $k \le 7$ are exact, computed from the exhaustive list of MDSs.
    For columns $k \in [8, 10]$, the number of components is correct provided the conjectures are correct, otherwise the numbers provided are under-estimations.
    For $k = 8$, the layer size and number of MDSs are estimated by sampling $100$ random components.
    For $k = 9$, the numbers are likely severe under-estimations.
    For $k = 10$, computation is too expansive.
  }
  \label{tab:number-comps}
  \centering
  \ifllncs
  \else
    \smaller
  \fi
  \setlength{\tabcolsep}{6pt}
  \begin{tabular}{lrrrrrrrrr}
    \toprule
    Method        & \multicolumn{6}{c}{Exhaustive} & \multicolumn{3}{c}{I-moves}                                                                                                                                                      \\
    \cmidrule(l){2-7} \cmidrule(l){8-10}
    $k$           & 2                              & 3                & 4                & 5                & 6                 & 7                   & 8                                        & 9                & 10              \\
    \midrule
    \# components & 1                              & 1                & 3                & 1                & 273               & 4                   & \num{194133}                             & \num{4318173}    & \num{195740496} \\
    \# MDSs       & 2                              & 4                & 30               & 28               & \num{68288}       & \num{18432}         & $\approx \num{3.1e11}$                   & $> \num{1.3e17}$ &      ---           \\
    Layer range   & \num{1}--\num{1}               & \num{1}--\num{1} & \num{1}--\num{2} & \num{1}--\num{2} & \num{1}--\num{48} & \num{28}--\num{153} & $\approx \num{2.5e3}\text{--}\num{29e3}$ & $> \num{1.2e8}$  &     ---            \\
    \bottomrule
  \end{tabular}
\end{table}

\subsection{Non-decycling PCR sets.}

Non-decycling PCR sets may also have valid F-moves and I-moves, but there are significant differences with MDSs.
Unlike MDSs (see Proposition~\ref{thm:f-moves-exist}), a non-decycling set it is not guaranteed to contain sets of left- and right-companions.
Even more, the analog graph to \gmds with non-decycling PCR sets as nodes and F-moves for edges is a non-connected graph where each component is a DAG (see Figure~\ref{fig:mds-graph} and Supplemental Material~\ref{sec:non-decycling-pcr}).
There cannot be any F-moves between an MDS and a non-decycling set.
On the other hand, there can be an I-move from a non-decycling set to an MDS (but not the other way around).

\section{Remaining path length and window guarantee}

By traversing the component graphs and the MDS graph, one can search for MDSs with desirable properties.
Unfortunately, as seen in Table~\ref{tab:number-comps}, every aspect of these graphs (i.e., number of MDS, number of components, layer size, etc.) seem to have super-exponential growth.
Enumerating all MDSs for $k \ge 9$ with the binary alphabet is likely not reasonable, and for the DNA alphabet it is even more difficult.
In this section, we provide some methods to explore the space of MDSs more efficiently and study the window guarantee of MDSs.

\subsection{Efficiently traversing the component graph.}\label{sec:effic-trav-comp}

As is seen in Table~\ref{tab:number-comps}, the number of MDSs and components is increasing quickly with $k$, although an actual estimate of the growth as a function of $k$ is not known.
The memory used to traverse a component can be reduced by noticing that each component is partitioned into $\sigma^{k-1}$ layers with edges only from one layer to the next (see Figure~\ref{fig:mds-graph}).
Therefore, it is only necessary to keep in memory the MDSs of the current and next layer to exhaustively enumerate every MDSs in the component.

As each component contains at least one cycle of length $\sigma^{k-1}$, the number of MDSs grows by at least a factor of $\sigma^{k-1}$ faster than that of components.
In fact, it grows much faster as each of the $\sigma^{k-1}$ layers has a size that grows fast with $k$ as well (see Table~\ref{tab:number-comps}).
While the number of MDSs and the size of the layers varies significantly between components, in general it is not efficient to traverse an entire component to find all the valid I-moves.
Using the following proposition, it is possible to find all the valid I-moves in a component by considering only one MDS.

Given an MDS $M$, any cycle $C$ satisfies $H_{M}(C) \ge 1$.
The cycles with a hitting number of exactly $1$, called \emph{constrained cycles}, play an important role in the existence or not of a valid I-move: an I-move is only valid if there is no constrained cycle using edges of the I-move.

\begin{restatable}{proposition}{imoveconstrained}
  Let $f \in \Sigma^{k-1}, m \in [1, 2^{\sigma}-2]$, and let $\chi$ be a component of $\gmds$.
  Then $\im{f}{m}$ is not a valid I-move in any MDS of $\chi$ if and only if $\exists a, b$ such that $m_{a} = 1, m_{b} = 0$ and there exist a constrained cycle using the edge $af \rightarrow fb$.
\end{restatable}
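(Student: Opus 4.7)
The natural strategy splits the iff into two directions. The ``if'' direction (a constrained cycle forces the I-move to be invalid everywhere in $\chi$) is short and relies on the constancy of the signature across $\chi$ proved in the previous proposition; the ``only if'' direction is the structural heart of the statement and will require more work.

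For the ``if'' direction, suppose $C$ is a constrained cycle using edge $af \to fb$ with $m_a = 1$ and $m_b = 0$. For any MDS $M \in \chi$ in which $\im{f}{m}$ is valid, the mask convention forces $af \in M$ and $fb \in M$, so $|C \cap M| \ge 2$, contradicting $\hit_M(C) = \hit_\chi(C) = 1$. Hence no such $M$ exists in $\chi$.

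For the ``only if'' direction, I would argue the contrapositive: if for every pair $a, b$ with $m_a = 1$ and $m_b = 0$ no constrained cycle uses the edge $af \to fb$, then some MDS in $\chi$ realizes the configuration $m$ at $f$, so $\im{f}{m}$ is valid there. The intuition is that each such bridge edge then lies only on cycles of hitting number $\ge 2$ in $\chi$, so requiring both $af \in M$ and $fb \in M$ never over-hits a constrained cycle and is locally consistent with $\sig(\chi)$.

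The main obstacle is upgrading this local consistency to an actual construction. My plan is to start from an arbitrary $M_0 \in \chi$ and iteratively reduce the Hamming distance between its configuration at $f$ and $m$ by applying F-moves at positions $f' \ne f$; by Proposition~\ref{thm:fmove-preserve-mds} these keep us inside $\chi$ and preserve all hitting numbers, so they cannot introduce new obstructions at any cycle. The crux is showing that a distance-reducing F-move always exists, which likely requires a case analysis of how $\lc(f') \cup \rc(f')$ can overlap $\lc(f) \cup \rc(f)$ (this essentially restricts $f'$ to one-character rotations of $f$), combined with the strongly-connected, $\sigma^{k-1}$-layered structure of $\chi$ from Proposition~\ref{thm:comp-structure} to guarantee enough maneuvering room. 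If the step-by-step reduction proves too rigid, a fallback is to construct the desired $M$ directly as a PCR set with one representative per PCR that is consistent with $m$ at $f$, and argue decyclicity from the hypothesis that no candidate bridge edge lies on a constrained cycle.
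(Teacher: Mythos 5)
Your ``if'' direction is correct and is exactly the paper's argument: a constrained cycle $C$ through $af \rightarrow fb$ has $\hit_{\chi}(C) = 1$, while validity of $\im{f}{m}$ in some $M \in \chi$ would force both $af$ and $fb$ --- two nodes of $C$ --- into $M$.

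The ``only if'' direction is the substance of the statement, and there your proposal is a plan with the decisive step missing. You reduce to showing that, absent obstructing constrained cycles, some $M \in \chi$ realizes the configuration $m$ at $f$, to be reached from an arbitrary $M_0$ by Hamming-distance-reducing F-moves. But you never say where the hypothesis (no constrained cycle through any edge $af \rightarrow fb$ with $m_a = 1$, $m_b = 0$) enters the argument that such a move always exists; ``locally consistent with $\sig(\chi)$'' is an intuition, not a mechanism, and the layered strongly-connected structure of $\chi$ by itself gives no control over which left- and right-companions of $f$ can simultaneously sit in an MDS of that component. Your fallback of directly building a PCR set consistent with $m$ at $f$ has a further defect: the statement is per-component, and a set constructed from scratch lands in some component with no guarantee that it is $\chi$. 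The paper proves the implication directly rather than by contrapositive: assuming $\im{f}{m}$ is valid nowhere in $\chi$, it takes $M = fM^{f}$ with $\rc(f) \subset M$, exhausts all F-moves except the moves $g_c = f[2:k-2]c$ with $m_c = 0$ (so $fc$ remains in the terminal set exactly when $m_c = 0$), concludes that some $af$ with $m_a = 1$ must be absent, and closes a backward walk from $af$ avoiding the terminal MDS into a cycle through $af \rightarrow fb$ that meets the MDS only at $fb$ --- the required constrained cycle. This terminal-MDS-plus-backward-walk device (the same one used in Proposition~\ref{thm:f-moves-exist} and Lemma~\ref{thm:exist-f-terminal}) is what your sketch is missing; even in your contrapositive framing you would need to run this construction and observe that its failure to yield a constrained cycle forces the I-move to be valid at the terminal MDS.
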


This proposition, proved in Supplementary Material~\ref{sec:i-move-constrained}, shows that to find the list of valid I-moves in the entire component it is sufficient to find the edges not covered by a constrained cycle in just one of the MDS of the component.
This holds, as by Proposition~\ref{item:hit-fmove}, the list of constrained cycles is constant across the MDSs of a component.
Moreover, tagging the edges covered by constrained cycle can be done with one depth-first search for each \kmer in the MDS.
The main advantage of this method is its run time is independent of the number of MDSs in the component.

\subsection{Remaining path length}\label{sec:rema-path-length}

The \emph{remaining path length} of an MDS $M$ is the length of the longest path in the DAG obtained by removing the \kmers of $M$ from $D_{k}$.
Given a selection scheme that selects in a sequence the \kmers from $M$, the remaining path length is precisely the window guarantee of the scheme.
The following proposition gives bounds on the effect of an F-move or I-move on the remaining path length (see Figure~\ref{fig:path-change}).

\begin{proposition}\label{thm:change-path-len}
  An F-move or RF-move can increase or decrease the remaining path length by at most $1$.
  An I-move can increase the remaining path length by at most $1$ or decrease it by half.
\end{proposition}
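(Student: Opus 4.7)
For an F-move $f$, the key observation is that every vertex of $\lc(f)$ becomes a sink in the new DAG $D_{k} \setminus fM$, since its only out-neighbours in $D_{k}$ lie in $\rc(f) \subset fM$. Consequently any directed path in the new DAG contains at most one vertex of $\lc(f)$, and only as its last vertex. Stripping that terminal vertex (if present) leaves a path whose vertices all lie outside $\lc(f) \cup \rc(f)$, a subgraph common to both DAGs, so it is a path in $D_{k} \setminus M$. This gives $L' \le L + 1$. Running the analogous ``source'' argument on the inverse RF-move (where each $\rc(f)$-vertex becomes a source) yields $L \le L' + 1$, hence $|L' - L| \le 1$.

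For an I-move $\im{f}{m}$, the new DAG contains the partial sets $L_{\mathrm{new}} = \{af : m_a = 1\}$ and $R_{\mathrm{new}} = \{fb : m_b = 0\}$; here each $L_{\mathrm{new}}$-vertex has its new-DAG out-edges entirely within $R_{\mathrm{new}}$, and each $R_{\mathrm{new}}$-vertex receives in-edges only from $L_{\mathrm{new}}$. A longest new-DAG path $P$ of length $L'$ therefore decomposes into maximal outside segments joined by $L_{\mathrm{new}} \to R_{\mathrm{new}}$ bridges, with at most one lone $L_{\mathrm{new}}$-end and one lone $R_{\mathrm{new}}$-start. For the upper bound $L' \le L + 1$, the plan is to mirror the F-move argument: keep every outside segment of $P$, and reroute each bridge $u \to af \to fb \to w$ through the old DAG using the surviving $A \to B$ structure together with outside detours. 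The main obstacle is that a new bridge's outside endpoints $(u,w)$ satisfy $m_{u[2]} = 1$ and $m_{w[k-1]} = 0$, exactly the opposite of the constraints $m_{a'} = 0$ and $m_{b'} = 1$ needed for an $A \to B$ bridge, so no purely local substitution works. Overcoming this will likely require a global argument leveraging the $(\sigma-1)$-vertex-connectivity of $D_{k}$ cited earlier in the paper to detour $P$ around $\lc(f) \cup \rc(f)$ at a cost of at most one extra edge overall.

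For the halving bound $L' \ge L/2$, the plan is to start from a longest old-DAG path $Q$ of length $L$ with $\alpha$ bridges $A \to B$ and examine its maximal outside segments, each of which is automatically a path in the new DAG. Their total length is at least $L - 3\alpha$ (each bridge contributes at most $3$ edges of $Q$) distributed across at most $\alpha + 2$ segments, so averaging yields some outside segment of length at least $(L - 3\alpha)/(\alpha + 2)$, already exceeding $L/2$ when $\alpha$ is small. For larger $\alpha$ the plan is to concatenate several outside segments into one new-DAG path using $L_{\mathrm{new}} \to R_{\mathrm{new}}$ bridges wherever endpoint characters are compatible. The principal obstacle is once again the reversed character constraints on outside endpoints, which I anticipate handling by a pigeonhole argument over the $\sigma$ possible values of $u[2]$ and $w[k-1]$, possibly combined with a local modification of $Q$ to align endpoint characters so that enough new bridges become available.
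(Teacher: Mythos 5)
Your F-move argument is correct and, if anything, cleaner than the paper's: the paper reaches the same $\pm 1$ bound by observing that a longest path in $D_{k}\setminus M$ must begin at a right-companion of a valid F-move and end at a left-companion of a valid RF-move, and then tracking how the move shifts those endpoints; your sink/source formulation packages the same fact more rigorously.

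The I-move half, however, rests on a misreading of the operation, and the gap this opens is never closed. The I-move $\im{f}{m}$ does not swap the mixed configuration for its mirror image: it moves only those left-companions $af$ with $m_{a}=1$ (the ones in $M$) to their right counterparts $fa$, while the right-companions $fb$ with $m_{b}=0$ that were already in $M$ \emph{stay} in the set. Hence $\rc(f)\subset\im{f}{m}M$ and $\lc(f)\cap\im{f}{m}M=\emptyset$; this is stated explicitly in the supplementary proof that $\gcomp$ is undirected (``to get to $M_{2}$ \dots where $\rc(f)\subset M_{2}$''), and the full-swap reading would also contradict the signature proposition, where the complementary cycle's hitting number changes by exactly $1$. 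Consequently your $R_{\mathrm{new}}=\{fb : m_{b}=0\}$ is \emph{not} present in the new DAG, there are no $L_{\mathrm{new}}\to R_{\mathrm{new}}$ bridges, and the ``reversed character constraints'' you identify as the main obstacle do not exist. Under the correct reading, the neighbourhood of $f$ in the new DAG is exactly the post-F-move picture (all of $\lc(f)$ present, all of $\rc(f)$ absent), so your own sink argument gives the $+1$ upper bound verbatim. For the decrease, the only vertices deleted from the old DAG are $\{fa : m_{a}=1\}$, so a longest old path merely loses those vertices and splits into pieces, each surviving in the new DAG; for $\sigma=2$ at most one vertex is lost, the path splits into at most two pieces, and the longer piece has length at least about half --- which is the paper's argument. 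Your proposed machinery of $(\sigma-1)$-vertex-connectivity detours, pigeonhole over endpoint characters, and segment concatenation is therefore aimed at the wrong graph, and is in any case left as an unexecuted sketch; as written the proposal does not establish either I-move bound.
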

\begin{proof}
  First, notice that the longest path in $D_{k} \setminus M$ must start at a valid F-move and end at a valid RF-move.
  Let $P = (m_{1}, \ldots, m_{n})$ be a longest path.
  The \kmer $m_{1}$ is the right-companion of some suffix $f$.
  Suppose there exists $a \in \Sigma$ such that $af \notin M$, then the path $P' = (af, m_{1}, \ldots, m_{n})$ avoids $M$ and is longer than $P$, contradicting its maximality.
  Therefore $\lc(f) \subset M$ and $f$ is a valid F-move in $M$.
  The proof is symmetrical for $m_{n}$ as the left-companion of some prefix $f'$ with $\rc(f') \subset M$.

  Because $m_{1} \in fM$, the path $P$ is shortened by $1$ by the F-move $f$, which may shorten the longest path if there was no other paths of that length.
  Also, $\rc(f) \subset fM$ (i.e., $f$ is a valid RF-move in $fM$ but it was not in $M$), hence there might be maximal path $P'$ ending at a left-companion of $f$ with $|P'|>n$.
  Because the F-move only moved nodes forward by one edge, $|P'| \le n+1$ and the longest path may have increased by $1$.
  The same argument applies to an RF-move.

  For a valid I-move $\im{f\dprime}{m}$ in $M$, the same reasoning applies for increasing by $1$.
  On the other hand, a longest path may have used an edge $af\dprime \rightarrow f\dprime{}b$ where $m_{a} = 0, m_{b} = 1$.
  That is $P = (m_{1}, \ldots, m_{i} = af\dprime, m_{i+1} = f\dprime{}b, \ldots, m_{n})$.
  After the I-move, $fb \in \im{f\dprime}{m}M$ and the path is now broken in up to two parts: $(m_{1}, \ldots, m_{i})$ and $(m_{i+2}, \ldots, m_{n})$.
  Therefore the remaining path length could be halved if $i = n/2$.
  \qed
\end{proof}

Based on this, we implemented a simulated annealing algorithm to find the smallest and largest remaining path lengths among MDSs.
The longest path for the MDS $M$ is computed using a modified topological sort of the DAG $D_{k} \setminus M$.
Supposed we are computing the smallest remaining path length.
Starting from a component of the MDS graph, the program performs a fixed number of random F-moves ($2k$ by default) and computes the remaining path length for each MDS and keeps the minimum.
Then, it finds all the valid I-moves in the current component as explained in Section~\ref{sec:effic-trav-comp}, and it picks one at random.

After performing the I-move, in the new component, the remaining path length is computed for $2k$ MDSs reachable by F-moves and a new minimum is computed.
If this new minimum is lower than the previous minimum, then the new component becomes the current component.
Otherwise, it becomes the current component only with some small probability.
Then the process is repeated from the current component for a fixed number of iterations.
As is traditional with simulated annealing, the probability to jump to ``worse'' components decreases over time.

\begin{table}
  \setlength{\tabcolsep}{3pt}
  \renewcommand{\u}{\underline}
  \caption{The remaining path length for the Mykkeltveit and Champarnaud sets compared to the range of remaining path length.
    For $\sigma=2$ and $k \le 7$ (underscored), the range of remaining path length is computed exactly from the exhaustive list of MDSs.
    All other values are estimated using a simulated annealing (SA) algorithm.
  }
  \label{tab:remaining-path-len}
  \centering
  \ifllncs
  \else
    \smaller
  \fi
  \begin{tabular}{clrrrrrrrrrrrrrrrrr}
    \toprule
    $\sigma$           & Algorithm   & \multicolumn{17}{c}{$k$}                                                                                           \\
    \cmidrule(l){3-19}
                       &             & 4     & 5      & 6      & 7      & 8   & 9   & 10   & 11  & 12  & 13  & 14  & 15  & 16  & 17  & 18   & 19   & 20   \\
    \midrule
    \multirow{4}{*}{2} & Mykkeltveit & 5     & 11     & 21     & 27     & 39  & 55  & 74   & 89  & 119 & 143 & 194 & 219 & 253 & 299 & 408  & 437  & 539  \\
                       & Champarnaud & 7     & 11     & 21     & 27     & 47  & 57  & 94   & 112 & 190 & 209 & 367 & 415 & 683 & 756 & 1343 & 1393 & 2560 \\
                       & SA Min      & \u{5} & \u{11} & \u{13} & \u{25} & 32  & 48  & 70   & 89  & 119 & 143 & 194                                        \\
                       & SA Max      & \u{7} & \u{12} & \u{26} & \u{32} & 55  & 80  & 116  & 158 & 257 & 288 & 387                                        \\
    \midrule
    \multirow{4}{*}{4} & Mykkeltveit & 21    & 41     & 77     & 111    & 145 & 231 & 330  & 403  & 616                                                   \\
                       & Champarnaud & 27    & 39     & 119    & 141    & 429 & 520 & 1601 & 1765 & 6180                                                    \\
                       & SA Min      & 20    & 41     & 77     & 111    & 145                                                                             \\
                       & SA Max      & 34    & 66     & 149    & 270    & 530                                                                             \\
    \bottomrule
  \end{tabular}
\end{table}

Table~\ref{tab:remaining-path-len} shows the remaining path length for the two previously known algorithms to generate MDSs and the range of remaining path length.
These ranges are either exact when an exhaustive list of MDSs is computable, and approximated using simulated annealing otherwise.
Based on the pattern that the Mykkeltveit set is always at or close to the minimum remaining path length, we conjecture that it holds for all parameter $k$ and $\sigma$.

\begin{conjecture}\label{conj:mykk-rema-path-len}
  For a given $\sigma$, let $\ell_{\textrm{min}}(k), \ell_{\textrm{max}}(k), \ell_{\textrm{Mykk}}(k)$ respectively be the smallest, largest and Mykkeltveit set remaining path lengths.
  Then $\ell_{\textrm{Mykk}}(k) - \ell_{\textrm{min}}(k) = o(\ell_{\textrm{max}}(k) - \ell_{\textrm{min}}(k))$ asymptotically in $k$.
\end{conjecture}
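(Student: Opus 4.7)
The plan is to decompose the claim into two separate asymptotic estimates. Setting $\Delta_{\textrm{Mykk}}(k) = \ell_{\textrm{Mykk}}(k) - \ell_{\min}(k)$ and $\Delta_{\max}(k) = \ell_{\max}(k) - \ell_{\min}(k)$, and noting $\ell_{\min} \le \ell_{\textrm{Mykk}}$, the conjecture $\Delta_{\textrm{Mykk}} = o(\Delta_{\max})$ reduces to upper-bounding $\Delta_{\textrm{Mykk}}$ by some function $g(k)$ and lower-bounding $\Delta_{\max}$ by a function that grows strictly faster than $g(k)$. Table~\ref{tab:remaining-path-len} suggests that $\Delta_{\textrm{Mykk}}$ is a low-order term on the scale of $\ell_{\textrm{Mykk}}$ itself (often $0$ in the computed range), while $\Delta_{\max}$ is comparable to $\ell_{\textrm{Mykk}}$ and appears to grow super-polynomially; the proof target is therefore a polynomial-in-$k$ bound on $\Delta_{\textrm{Mykk}}$ together with a super-polynomial lower bound on $\Delta_{\max}$.

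First I would attack the upper bound on $\ell_{\textrm{Mykk}}(k)$ using its explicit characterization: a k-mer $s = s_0 \ldots s_{k-1}$ is in the Mykkeltveit set when its weighted charge $W(s) = \sum_{j=0}^{k-1} j\, s_j \bmod k$ attains a distinguished value on its PCR. Along any path in $D_k$, the charge transforms under each single-character shift by a rule depending only on the dropped and appended characters. Analyzing this drift (following the complex-roots-of-unity framework of~\cite{mykkeltveitgolomb}) and applying the pigeonhole principle over the $k$ possible residues together with the Lyndon rotation tie-breaking should bound the maximum length of a $W$-avoiding path and yield a quantitative estimate of $\ell_{\textrm{Mykk}}(k)$. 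Combined with the trivial universal lower bound $\ell_{\min}(k) \ge k-1$ (from averaging along a Hamiltonian cycle of $D_k$, since $|M| \approx \sigma^k/k$), this controls $\Delta_{\textrm{Mykk}}$ from above if the two bounds are sufficiently close, but to reach the conjecture one really needs a much stronger lower bound on $\ell_{\min}(k)$.

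The matching lower bound on $\ell_{\min}(k)$ is the main obstacle. The decycling condition is global, and the naive bound from the previous paragraph is far from tight. A promising route is to combine the cycle-signature proposition with the $(\sigma-1)$-vertex-connectivity of $D_k$ from~\cite{connectivity_dbg}: since each PCR contributes exactly one k-mer to any MDS $M$, while $\sig(M)$ must be a positive integer on every cycle, an averaging argument over cycles of a fixed length should force some cycle to have ``bunched'' hits, leaving an arc of length $\Omega(\ell_{\textrm{Mykk}}(k))$ in $D_k \setminus M$. Alternatively, I would try to verify directly that the Mykkeltveit component is a local minimum of the remaining-path-length functional under both F-moves and I-moves, and then use Proposition~\ref{thm:change-path-len} together with an induction along the component graph $\gcomp$ to show that reaching a strictly better MDS requires more than $o(\Delta_{\max})$ I-move improvements, each of which is bounded in its effect.

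Finally, I would lower-bound $\Delta_{\max}(k)$ by exhibiting an explicit ``bad'' MDS family. The Champarnaud construction itself already gives remaining-path-lengths that appear to grow super-polynomially faster than Mykkeltveit (compare the relevant rows of Table~\ref{tab:remaining-path-len}), so a direct analysis of the Champarnaud rotation rule, or of the lexicographically-smallest PCR representative MDS, should certify $\Delta_{\max}(k) = \omega(\Delta_{\textrm{Mykk}}(k))$. The hardest step overall remains the universal lower bound on $\ell_{\min}$; should that prove out of reach, a weakened version of the conjecture replacing $o$ by an explicit ratio bound — using only the Mykkeltveit upper bound and the Champarnaud-based lower bound on $\Delta_{\max}$ — would still formalize the empirical finding that the Mykkeltveit set is near-optimal among MDSs for the window guarantee.
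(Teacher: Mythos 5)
This statement is a conjecture: the paper offers no proof, only the computational evidence of Table~\ref{tab:remaining-path-len} (the Mykkeltveit set tracks the simulated-annealing minimum for small $k$) together with the known bounds $\Omega(k^{2})$ and $O(k^{3})$ on $\ell_{\textrm{Mykk}}$ from~\cite{MykkBounds}. Your proposal is likewise a research plan rather than a proof, and its two central steps do not go through as described. The averaging argument for a strong lower bound on $\ell_{\min}(k)$ has no mechanism: averaging the $\sim\sigma^{k}/k$ hits of an MDS over a Hamiltonian cycle only yields $\ell_{\min}(k)=\Omega(k)$, and nothing in the cycle-signature proposition or the $(\sigma-1)$-connectivity of $D_{k}$ forces hits to ``bunch'' at the $k^{2}$ or $k^{3}$ scale. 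The alternative route via local optimality also fails on two counts: Table~\ref{tab:remaining-path-len} shows the Mykkeltveit set is \emph{not} a global minimum (e.g., $21$ versus $13$ at $k=6$, $39$ versus $32$ at $k=8$), so local optimality under single moves is the wrong target, and even if established it would say nothing about MDSs reached after many moves in a super-exponentially large graph, since Proposition~\ref{thm:change-path-len} allows each move to change the length.

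More importantly, you misdiagnose the lower bound on $\ell_{\min}$ as the main obstacle; it is not needed at all. Since $\ell_{\min}(k)\le\ell_{\textrm{Mykk}}(k)$ by definition,
\[
\frac{\ell_{\textrm{Mykk}}(k)-\ell_{\min}(k)}{\ell_{\max}(k)-\ell_{\min}(k)}\;\le\;\frac{\ell_{\textrm{Mykk}}(k)}{\ell_{\max}(k)-\ell_{\textrm{Mykk}}(k)},
\]
and with the known $\ell_{\textrm{Mykk}}(k)=O(k^{3})$ the entire conjecture reduces to the single claim $\ell_{\max}(k)=\omega(k^{3})$, for which it suffices to lower-bound the remaining path length of any one explicit MDS family---e.g., the Champarnaud set, whose data in Table~\ref{tab:remaining-path-len} fits an exponent of about $6$. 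That is the one step worth attacking; the rest of your plan can be discarded.
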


\subsection{Per-component remaining path length}

Proposition~\ref{thm:change-path-len} gives a bound to the change in the remaining path length as the MDS graph is traversed using F-moves and I-moves.
Within one component, given that every MDS is in a cycle of length $\sigma^{k}$, the remaining path length along this cycle could change by up to $\sigma^{k}/2$.
In other words, this proposition only gives an exponential bound on the range of remaining path length within a component.

The graph in Figure~\ref{fig:path-change} has a point for each component at the coordinate $(m_{P}(\chi), M_{P}(\chi))$ where $m_{P}(\chi)$ is the minimum of the remaining path length over all the MDSs of the component $\chi$, and $M_{P}(\chi)$ is the maximum.
The vertical distance from the diagonal $y = x$ represents the range of remaining path lengths within a component.
We observe for $k \le 8$ on the binary alphabet that the range is bounded by $O(k)$.

\begin{conjecture}\label{conj:lin-comp-len}
  Within a component of \gmds, the range of remaining path length is $O(k)$.
\end{conjecture}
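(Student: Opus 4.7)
The plan is to exploit the cycle structure of each component given by Proposition~\ref{thm:comp-structure}. Every MDS in a component lies on a cycle of length $\sigma^{k-1}$ in which each F-move $f \in \Sigma^{k-1}$ is applied exactly once, and any two MDSs in the component are connected by some such cycle. Since $L(M)$ returns to its starting value around such a cycle, the range of $L$ over the entire component is at most the maximum deviation along a single cycle. By Proposition~\ref{thm:change-path-len} each F-move changes $L$ by at most $\pm 1$; combined with net-zero change, the deviation along the cycle is at most half the number of steps at which $L$ actually changes. The conjecture thus reduces to showing that along one cycle of length $\sigma^{k-1}$, only $O(k)$ of the F-moves alter $L$.

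To attack this, I would classify the F-moves that can change $L(M)$. Following the proof of Proposition~\ref{thm:change-path-len}, a move $f$ decreases $L$ only when some longest path in $D_k \setminus M$ begins at a $k$-mer in $\rc(f)$, and can increase $L$ only when a longer path becomes available through $\lc(f)$ after the move. Call such $f$ \emph{active at} $M$. I would argue that in any MDS the number of active suffixes is $O(k)$: the start and end $k$-mers of longest paths are sources and sinks of the DAG $D_k \setminus M$ that lie on a path of length $L$, and this collection is constrained by the path geometry (in particular it is at most roughly $L$ itself, which empirically grows polynomially in $k$). Bounding the active set at any single MDS is the easier half.

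The harder half, and the main obstacle, is controlling how the active set evolves as the cycle is traversed: even if only $O(k)$ suffixes are active at each MDS, a priori different F-moves could be active at different points along the cycle, so the total number of active performances could be much larger than $O(k)$. The crux would be a stability lemma: an F-move perturbs $D_k \setminus M$ only by swapping $\lc(f)$ for $\rc(f)$, a local change, and should therefore alter the active-suffix set by only $O(1)$. Ideally one can show further that once $f$ is performed and $L$ changes, $f$ becomes inactive and remains so until a nearby move reactivates it, so that ``active events'' along the cycle form $O(k)$ short bursts rather than scattering throughout the $\sigma^{k-1}$ steps.

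If this stability-and-propagation picture holds, combining it with Proposition~\ref{thm:change-path-len} and the net-zero balance around the cycle yields the $O(k)$ bound. I expect the stability lemma itself to be the technical heart of the proof, since it requires reasoning about how longest paths in a DAG react to the simultaneous insertion of one companion set and deletion of another. Should the direct argument fail, a fallback would be to introduce a potential function $\Phi(M)$ that tracks $L(M)$ up to an $O(1)$ additive error, is invariant under the ``bulk'' of F-moves along the cycle, and changes only on a designated $O(k)$-sized subset of F-moves determined by the cycle signature of the component.
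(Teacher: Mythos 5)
This statement is a \emph{conjecture} in the paper: the authors do not prove it, offering only computational evidence (the per-component min/max plot for $\sigma=2$, $k\le 8$ in Figure~\ref{fig:path-change}) and an informal plausibility argument that most F-moves leave the remaining path length unchanged. Your proposal is in the same spirit as that informal discussion, but it is not a proof either, and it contains two concrete gaps. First, your opening reduction --- that the range of $L$ over the whole component is bounded by the deviation along a single cycle of length $\sigma^{k-1}$ --- does not follow from Proposition~\ref{thm:comp-structure}. That proposition says every MDS lies on \emph{some} cycle of length $\sigma^{k-1}$, not that any two MDSs of a component lie on a \emph{common} such cycle; indeed a component can contain vastly more than $\sigma^{k-1}$ MDSs (e.g.\@ tens of thousands for $\sigma=2$, $k=6$, versus cycles of length $32$), so a single short cycle cannot visit them all. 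Connecting two arbitrary MDSs requires chaining paths or cycles, and your per-cycle deviation bound would then accumulate, destroying the $O(k)$ conclusion.

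Second, the step you correctly identify as the technical heart --- the ``stability lemma'' asserting that only $O(k)$ F-moves along a cycle are active, and that activity occurs in $O(k)$ short bursts --- is not established and is essentially a reformulation of the conjecture itself: bounding the number of length-changing moves on a cycle by $O(k)$ is exactly the kind of control over longest paths in $D_k\setminus M$ that one would need to prove the original claim. The observation that each F-move changes $L$ by at most $1$ (Proposition~\ref{thm:change-path-len}) and that $L$ returns to its initial value around a cycle gives only a bound of half the cycle length, i.e.\@ $\sigma^{k-1}/2$, which is the exponential bound the paper already notes is insufficient. So your plan correctly locates where the difficulty lies, and your heuristics match the authors' own, but no part of the argument closes the gap between the trivial exponential bound and the conjectured $O(k)$.
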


There are plausible reasons for having such a small range.
Consider two extremes: (1)~there are many F-moves and RF-moves valid at the same time in an MDS $M$, (2)~there is only 1 F-move and 1 RF-move valid in $M$.
In the first case, doing one of these F-moves or RF-moves affects the maximal paths that start or end at these moves.
Consequently, many of these moves change the length of paths that are not the longest.
In other words, these moves have no effect on the remaining path length.
In the second case, it is possible to show that doing the 1 valid F-move does not change the remaining path length (the longest path is truncated by its first node and extending by one node, hence not changing in length).
This type of situation is likely to happen when there are few F-moves and RF-moves possible.
In both cases, most F-moves do not affect the remaining path length.

This conjecture partially justifies only exploring $O(k)$ MDSs within one component in the simulated annealing algorithm in Section~\ref{sec:rema-path-length}.

\begin{figure}
\begin{minipage}{0.49\textwidth}
  \centering
  \includegraphics[scale=1.2]{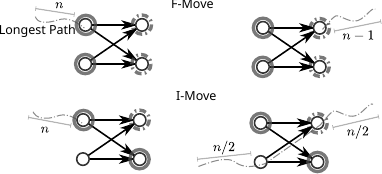}
\end{minipage} \hfill
\begin{minipage}{0.49\textwidth}
  \centering
  \includegraphics{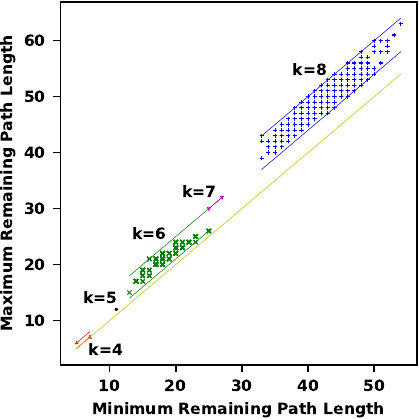}
\end{minipage}
\caption{Left: If a longest path does not start at a valid F-move $f$, i.e., one of the left-companion of $f$ in solid gray is missing, then it could be extended to the left, contradicting maximality.
  Doing F-move $f$ (changing solid gray for dashed nodes) can shorten the longest path by $1$ node.
  Also, after doing F-move $f$, a path now ending in one of the solid gray node could be the longest and was extended by $1$ node.
  If the path goes through an I-move $\im{f\dprime}{m}$, then doing the I-move cuts the path in two possibly equal parts.
  Right: Comparison of the minimum and maximum remaining longest path for components of $\gmds(2,k)$ for $4 \le k \le 8$.
  Each point represents one connected component of the graph.
  The minimum and maximum remaining path lengths are computed over all the MDSs of a component.
  Therefore, the vertical distance of a point from the diagonal $y=x$ (in yellow) shows the variation of remaining path length within a component.
  For $k=8$, a subsample of 500 components were examined, as the total number of components is exceedingly large.
  The lines are drawn to depict the bounds of the increase between components.
  In all cases seen, the difference between the minimum and maximum remaining length within a component is in some range $[\alpha,\alpha+k]$ for an alpha that is less than $k$.
  }
  \label{fig:path-change}
\end{figure}

\section{Discussion}

\paragraph{Proportion of MDSs.}
A simple algorithm to generate a random MDS, sampling the space of MDSs uniformly, is to select at random \kmer from each PCR and check whether it is decycling, and to resample if not.
Even though the space of MDSs is (maybe surprisingly) large, it is nonetheless only a tiny fraction of the PCR sets.
The number of PCR sets is easily computable~\cite{necklaces} and asymptotically there are $\Omega(k^{\sigma^{k}/k})$ PCR sets.
There is no formula for the number of MDSs, but based on the numbers from Table~\ref{tab:number-comps}, for $k=8$ 
of the $\num{2e29}$ PCR sets
the proportion that are MDSs is only $\num{2e-18}$. 
For $k=9$ 
that proportion 
is essentially~$0$.
Thus, the random
sampling method is not of any practical use.

In that sense Conjecture~\ref{conj:imove-conn}, provided it is true, is an efficient method to enumerate all MDSs as only MDSs are ever considered without the need to filter out an overwhelming number non-decycling sets.
Even if this conjecture is eventually proven wrong, the F-moves and I-moves allow us to explore a large subspace of MDSs, and, using simulated annealing or more advanced machine learning methods, to find MDSs with desirable properties.

Moreover, on the theoretical side, providing evidence for this conjecture lead us to a deeper understanding of the space of MDSs and to formulate other useful conjectures.


\paragraph{Mykkeltveit set and short windows.}
It is surprising (or lucky) that the first algorithm for constructing MDSs by Mykkeltveit~\cite{mykkeltveitgolomb} gives a set with close to the shortest remaining path length.
This fact may explain retrospectively the success of previous methods using this set as the starting point to design minimizers schemes~\cite{DOCKS,DOCKS2,Pasha,EfficientMinimizerOrders}.
The growth of the remaining path length for the Mykkeltveit set is well characterized~\cite{MykkBounds}: it is $\Omega(k^{2})$ and $O(k^{3})$.
Fitting the data from Table~\ref{tab:remaining-path-len} we obtain an exponent of $3.12 \pm 0.14$, suggesting an actual growth of $O(k^{3})$.
Provided that Conjecture~\ref{conj:mykk-rema-path-len} holds, this would answer the question of the shortest window guarantee that is possible using an MDS.
For comparison, fitting the Champarnaud data gives an exponent of $6.1 \pm 0.59$.

\paragraph{Longest remaining path length.}

Conjecture~\ref{conj:lin-comp-len} only suggests a bound on the range of remaining path length within a component of \gmds.
A legitimate question is what is the bound of the range in \gmds as a whole.
Figure~\ref{fig:path-change} could suggests that this range is polynomial in $k$, although the trend in this figure is much too short to elevate this statement to a conjecture.
Given the known results bounding the longest remaining path of the Mykkeltveit set by $O(k^{3})$, this would mean a polynomial bound on the remaining path length of MDSs.

This statement seems counterintuitive at first (and is, of course, not proven).
We saw in Section~\ref{sec:wind-guar-exist} that syncmers have a window guarantee of $\sigma^{k-1}$, hence there exists DSs that are not of minimum size that have exponentially long remaining paths.
How then can sets with fewer \kmers (MDSs) have a shorter remaining path length?
The intuition is as follows.
In the syncmers construction, we chose one exponentially long path (length $\sigma^{k-1}-1$) through the graph while every node not on this path is added to the DS $M$.
The size of the DS $|M| = \sigma^{k}(1-1/\sigma)$ is exponential as well: it takes many nodes, guiding that long path, to prevent cycles.
On the other hand, the size of an MDS is $\thicksim \sigma^{k}/k$, which is $o(\sigma^{k})$.
The average remaining path length is $k$ and there are too few \kmers in an MDS to guide an exponentially long path to prevent it from creating cycles (i.e., to have back edges).


\section{Conclusion}

The window guarantee is an important requirement, theoretically and practically, to define and optimize sketching methods.
As discussed, an underlying concept that can be extracted from the definition of this guarantee in any local sketching method is a set of nodes in the de~Brujin graph which are unavoidable (i.e., decycling).
While many such sets exist, the minimum-sized sets have important properties that can be exploited and examined. 
In this work, we described some of the first theoretical findings on properties of these sets, as well as a method to traverse many (if not all) MDSs for a given \kmer length.
We also showed that the choice of MDS, whether direct or as an implication of the design of the sketching method, does have an impact on the strength of the window guarantee.
Although we provide our major results as conjecture, we present significant evidence to support these claims.

\noindent\textbf{Acknowledgement:}
This work was supported in part by the US National Science Foundation [DBI-1937540, III-2232121], the US National Institutes of Health [R01HG012470] and by the generosity of Eric and Wendy Schmidt by recommendation of the Schmidt Futures program.

\noindent\textbf{Conflict of Interest:}
C.K.\ is a co-founder of Ocean Genomics, Inc; G.M.\ is VP of software engineering at Ocean Genomics, Inc.


\bibliographystyle{splncs04}
\bibliography{main}

\clearpage\null\thispagestyle{empty}
\setcounter{page}{1}
\setcounter{section}{0}

\centerline{\huge \textbf{Supplementary Material}}
\vskip1in

\section{Exponential window guarantee for Decycling Sets (DS)}\label{sec:expon-wind-guar}

Consider a syncmer sketching method selecting a \kmer if the smallest $s$-mer is at position $1$ (first position).
Assume $s \le k-1$.
The order on the $s$-mer is as follows: create a de~Bruijn sequence $D$ of order $s$ (it contains all the $s$-mers once and only once) and $s_{1} < s_{2}$ iff the $s$-mer $s_{1}$ appears after $s_{2}$ in $D$.
The sequence $D$ is a decreasing sequence of $s$-mers of length $\sigma^{s}+s-1$.
With $s = k - 1$, we created a sequence of length $\Omega(\sigma^{k-1})$ without a selected \kmer.

\section{MDS graph structure}\label{sec:mds-structure}

\begin{figure}
  \centering
  \includegraphics[scale=0.8]{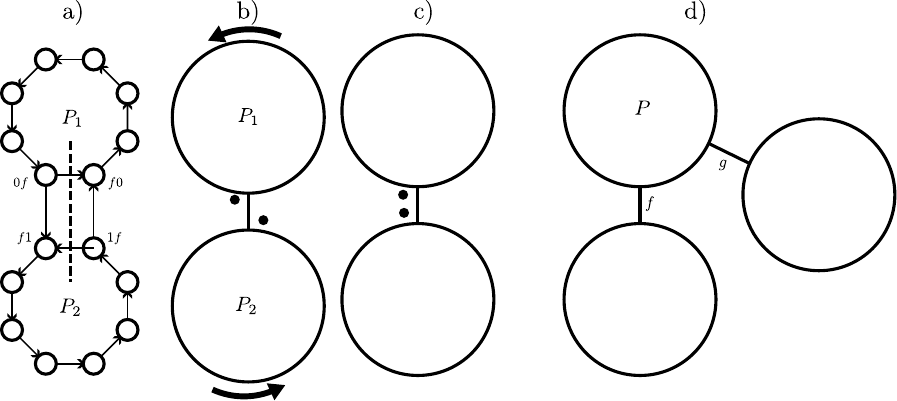}

  \caption{Simplified representation of PCRs, F-moves and I-moves when $\sigma = 2$.
    a)~shows two PCRs from the de~Bruijn graph $D_{k}$.
    Every \kmer is a circle, and they are all oriented counter-clock-wise (see PCR $P_{1}$ and $P_{2}$ here).
    Let $f$ be an F-move that involves $P_{1}, P_{2}$.
    Here $P_{1}$ has the edge $0f \rightarrow f0$, and $P_{2}$ has $1f \rightarrow f1$: these are the PCR edges.
    The cross-PCR edges $0f \rightarrow f1$ and $1f \rightarrow f0$ form anti-parallel edges between $P_{1}$ and $P_{2}$.
    b)~The \emph{simplified PCR/pebbles} representation shows PCRs as large cycles without representing individual \kmers and only representing the F-move edges of interest.
    The elements from the MDS in each PCR (the pebbles) are small black circles that can travel only counter-clock-wise around the PCR.
    An F-move is an edge between $P_{1}$ and $P_{2}$ and act as a semaphore: a pebble can move one step around the PCR and across the edge of $f$ only when the other pebbles are present next to the edge in the other PCR (i.e., $\lc(f)$ is in the MDS), as shown in b), and all pebbles move across the edge at the same time.
    c)~The position of the pebbles for the I-move $\im{f}{1}$: bit $0$ is set but not bit $1$, so the pebbles are on $0f$ and $f1$ (left side of the edge of $f$).
    The top pebble can move across the edge, counter-clock-wise, while the lower one stays still.
    For I-move $\im{f}{2}$ with bit $0$ unset and bit $1$ set, the pebbles would be on $1f$ and $f0$, on the right side of the edge of $f$.
    d)~If F-moves $f$ and $g$ have a PCR $P$ in common, then, because F-moves act like semaphores, it is not possible to do the F-move $f$ twice before $g$ is done once.
    For the pebble to go around $P$ to do $f$ a second time, necessarily the F-move $g$ was done as well.
  }
  \label{fig:representation}
\end{figure}

\begin{lemma}[Commutative property]\label{thm:fmove-commute}
  Let $M$ be an MDS and $f_1, f_2\in\Sigma^{k-1}$ be two valid F-moves in $M$, then $f_1$ is a valid F-move in $f_2M$, $f_2$ is valid in $f_1M$, and $f_1f_2M = f_2f_1M$.
\end{lemma}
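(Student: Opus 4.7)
The plan is to reduce the three claims of the lemma to three disjointness facts about the companion sets inside $\Sigma^{k}$ and then finish by a short set calculation. The case $f_1 = f_2$ is trivial, so I focus on $f_1 \ne f_2$. The three facts I will need are (i)~$\lc(f_1) \cap \lc(f_2) = \emptyset$, (ii)~$\lc(f_1) \cap \rc(f_2) = \emptyset$, and (iii)~$\lc(f_2) \cap \rc(f_1) = \emptyset$. Fact~(i) is immediate, since a $k$-mer has a unique length-$(k-1)$ suffix and therefore belongs to at most one left-companion set.

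The hard part will be (ii) and (iii): they are false for arbitrary pairs $f_1,f_2$ and rely crucially on $M$ being an MDS in which both F-moves are valid. My approach to (ii) is a PCR argument. Suppose, for contradiction, that $x \in \lc(f_1) \cap \rc(f_2)$, and write $x = b f_1 = f_2 a$ with $a,b \in \Sigma$. The one-step right cyclic rotation of $x$ is $a f_2$, which has suffix $f_2$, so $a f_2 \in \lc(f_2)$, and $a f_2$ lies in the same PCR as $x$. Because both F-moves are valid, $\lc(f_1) \cup \lc(f_2) \subset M$, so both $x$ and $a f_2$ belong to $M$. But every MDS is a PCR set by Mykkeltveit's and Champarnaud's theorems, so $x = a f_2$. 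Writing $f_2 = c_1 \cdots c_{k-1}$ and comparing $c_1 \cdots c_{k-1} a = a c_1 \cdots c_{k-1}$ forces $f_2 = a^{k-1}$, hence $x = a^{k}$ and $f_1 = a^{k-1} = f_2$, contradicting $f_1 \ne f_2$. Fact~(iii) is symmetric. Notice that the same reasoning also handles the corner case where one of $f_1, f_2$ is itself a homopolymer suffix.

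With (i)--(iii) in hand, the three conclusions drop out in a few lines. For the validity of $f_1$ in $f_2 M = (M \setminus \lc(f_2)) \cup \rc(f_2)$: we have $\lc(f_1) \subset M$ by hypothesis, and (i) says removing $\lc(f_2)$ leaves $\lc(f_1)$ untouched, so $\lc(f_1) \subset f_2 M$; validity of $f_2$ in $f_1 M$ is the same argument with the roles swapped. For the equality, expanding $f_1 f_2 M$ and using (ii) in the form $\rc(f_2) \setminus \lc(f_1) = \rc(f_2)$ yields
\[
f_1 f_2 M = \bigl(M \setminus (\lc(f_1) \cup \lc(f_2))\bigr) \cup \rc(f_1) \cup \rc(f_2),
\]
an expression manifestly symmetric in $(f_1, f_2)$, so it equals $f_2 f_1 M$.
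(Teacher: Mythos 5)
Your proof is correct and takes essentially the same route as the paper's: both arguments come down to the fact that every MDS is a PCR set, which forces the companion sets of two distinct valid F-moves to occupy disjoint PCRs --- the paper asserts this disjointness directly and reads off validity and commutativity from it, while you derive the equivalent disjointness facts (ii)--(iii) by a rotation-plus-PCR-set contradiction and then finish with explicit set algebra. One nitpick: the case $f_1 = f_2$ is not ``trivial'' but excluded --- after performing $f$, the move $f$ is in general no longer valid in $fM$ --- so the lemma's phrase ``two valid F-moves'' must be read as $f_1 \ne f_2$, which is exactly the case you treat.
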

\begin{proof}
  The left companions of $f_1$ and $f_2$ are all in different PCRs.
  Hence, after doing the F-move $f_1$ or $f_2$, the other F-move is still valid.
  Moreover, regardless of the order in which the F-moves are performed, the resulting set is the same.\qed
\end{proof}

By extension, in a chain of F-moves, reordering the F-moves, as long as it is valid, does not change the result.
Note that there is no equivalent statement for I-moves: if $\im{f_{1}}{m_{1}}, \im{f_{2}}{m_{2}}$ are two valid I-moves in $M$, then $\im{f_{2}}{m_{2}}$ may not be valid in $\im{f_{1}}{m_{1}}M$.

In the following proofs, we use the simplified representation for PCRs, F- and I-moves given in Figure~\ref{fig:representation}.
For simplicity, the figure shows an example with the binary alphabet.
When $\sigma > 2$, an F-move $f$ represents a hyperedge between $\sigma$ PCRs rather than a simple edge as shown.

\mdsgraphcomponents*

\begin{proof}[Points~\ref{item:cycle-length} and~\ref{item:fmoves-cycles}, length of cycles]
  Every PCR is a cycle in $D_{k}$ and an MDS $M$ is seen as pebbles sitting on the \kmers (see Figure~\ref{fig:representation} b)
  There is one pebble per PCR.
  An F-moves involves $\sigma$ distinct PCRs (edges $af \rightarrow fa, a \in \Sigma$ are each in their own PCR).
  Hence an F-moves is an hyperedge connecting $\sigma$ PCRs.
  An F-move is like moving the pebbles along $\sigma$ PCRs at a time, from left-companions to right-companions, and this move is legal only if $\lc(f) \subset M$.
  In that sense, an F-move is like a semaphore: pebbles can move only if all their left-companions are present in the set.

  First, because every MDS has a valid F-move and a component of \gmds is finite, a component must have a cycle.
  Let $C = (M_{0}, \ldots, M_{n-1})$ be a cycle of MDSs in \gmds, and equivalently $C = (f_{0}, \ldots, f_{n-1})$ is a list of F-moves such that $M_{i+1} = f_{i}M_{i}$ (indices taken modulo $n$).
  After doing F-move $f_{0}$, the pebble on at least one PCR, say $P_{0}$, has moved.
  Because $C$ is a cycle, by the time $f_{n-1}$ is done, all pebbles are back on their respective starting spot.
  Meaning the pebble on $P_{0}$ went all the way around (possibly multiple times) $P_{0}$.
  To move around $P_{0}$ with F-moves, the pebbles in the PCR adjacent to $P_{0}$ must have moved as well, and, by the time $f_{n-1}$ is done, go around their respective PCRs.
  By transitivity, and because the de~Bruijn graph is strongly connected, every pebble on every PCR has gone around its PCR after $f_{n-1}$ is done.
  Because every node went around its PCR, this means that every one of the $\sigma^{k-1}$ F-moves was done and $n \ge \sigma^{k-1}$.

  Conversely, because the F-move/hyperedge act as semaphores, it is not possible for a pebble on a PCR to do more rotations around its own PCR than the pebbles on the adjacent (by hyperedge) PCRs.
  To see this, consider the starting position of the pebble on PCR $P_{0}$.
  For this pebble to start a second turn around $P_{0}$, all of its left-companions must be back on their starting spot and also start a second turn around their own PCRs.
   This holds for all PCRs by transitivity.

  Hence, in a cycle of the MDS graph, the pebbles of all PCRs go around the same number of times, say $\alpha$, and the number of F-moves in the cycle $C$ is $n = \alpha \sigma^{k-1}$.
  \qed
\end{proof}

\begin{figure}
  \centering
  \includegraphics{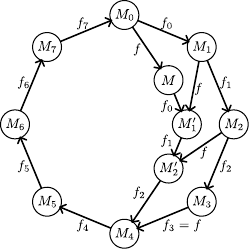}
  \caption{Example of a cycle in $\gmds(2, 4)$.
    The outer circle is $C = (f_{0}, \ldots, f_{7})$, a cycle of length $\sigma^{k-1}$.
    $M = fM_{0}$ is a neighbor of $M_{0}$ not on $C$.
    Because $f$ must occur in $C$, here $f = f_{3}$, then $f$ commutes with $f_{0}, f_{1}, f_{2}$.
    Hence $(f = f_{3}, f_{0}, f_{1}, f_{2}, f_{4}, \ldots, f_{7})$ is also a cycle in $\gmds(2, 4)$ and it contains $M_{0}$ and $M$.
  }
  \label{fig:strongly-connected}
\end{figure}

\begin{proof}[Point~\ref{item:strongly-connected}, strongly connected]
  As in the previous proof, there exists a cycle $C=(M_{0}, \ldots, M_{n-1})$ in \gmds, and its edges are $(f_{0}, \ldots, f_{n-1})$ with $M_{i+1}=f_{i}M_{i}$.

  We show that for any node $M_{i}$ of this cycle and any neighbor $M$ of $M_{i}$, reachable by an F-move or RF-move from $M_{i}$, $M$ and $M_{i}$ are in a cycle.
  If this holds, by transitivity of the relation ``being in the same strongly-connected component'', any pair of nodes in the component are in a cycle and the component is strongly-connected.

  WLOG, let's prove it for $M_{0}$ (see Figure~\ref{fig:strongly-connected}).
  It is a consequence of the commutativity of the F-moves (Lemma~\ref{thm:fmove-commute}).
  Let $M = fM_{0}$ be a neighbor of $M_{0}$ for some $f \ne f_{0}$.
  Because in a cycle all F-moves occur, there exists a first $j \in [1, n-1]$ such that $f_{j} = f$ (and $f \ne f_{i}, i \in [0, j-1]$).
  $f$ is valid in $M_{0}$, hence it is also valid in $M_{1}$, and recursively in , $M_{2}, \ldots, M_{j}$.
  Therefore $f$ commutes with $f_{0}, \ldots, f_{j-1}$ and the series of F-move $(f = f_{j}, f_{0}, \ldots, f_{j-1})$ is another path from $M_{0}$ to $M_{j+1}$ that is going through $M$.
  This path followed by the remainder of $C$ from $M_{j+1}$ back to $M_{0}$ is a cycle that includes both $M_{0}$ and $M$.
  \qed
\end{proof}

\begin{proof}[Point~\ref{item:nodes-cycles}, cycle length $\sigma^{k-1}$]
  Let $M$ be an MDS on a cycle ${C}$ in \gmds.
  It is of length $\alpha\cdot\sigma^{k-1}$, with $\alpha \ge 1$ by point~\ref{item:cycle-length}.
  Suppose that $\alpha > 1$.
  Let $C=(f_1,\ldots,f_{\alpha\cdot\sigma^{k-1}})$ be the chain of F-moves representing that cycle.
  Every distinct F-move occurs exactly $\alpha$ times in that chain.
  We show that the chain can be reordered so that the $\sigma^{k-1}$ different F-moves occur at the first  $\sigma^{k-1}$ positions of the chain.

  If it is not already the case that the first $\sigma^{k-1}$ F-moves are distinct, there must be an F-move $f$ that occurs twice in the list before an F-move $g$ occurs for the first time.
  Let $i<j$ be two indices which are the first two occurrences of $f$ in the chain (i.e., $f_i=f_j=f$), and such that $j+1$ is the first occurrence of $g$ ($f_{j+1}=g$).
  If any of the PCRs involved in the F-move $f$ are also involved in the F-move $g$, then it is not possible to use $f$ twice in $C$ before using $g$ (see Figure~\ref{fig:representation}d).
  Therefore the PCRs involved in the F-moves $f$ and $g$ are distinct, and $g$ must be a valid F-move just before the second use of $f$ as well.
  In other words, $f_j$ and $f_{j+1}$ commute.

  Repeated swapping of F-moves leads to the desired chain of F-moves with all $\sigma^{k-1}$ distinct F-moves in the first positions, which induces a cycle of length $\sigma^{k-1}$ containing $M$.
  \qed
\end{proof}

\begin{proof}[Point~\ref{item:thickle}, $\sigma^{k-1}$-partite]
  Partition the nodes of a component of \gmds as follows.
  We create $\sigma^{k-1}$ sets: $\mathcal{P}_{0}, \ldots, \mathcal{P}_{\sigma^{k-1}-1}$.
  Let $M_{0}$ be an arbitrary MDS of the component and assign it to the set $\mathcal{P}_{0}$.
  For every other MDS $M$, take a shortest path $P(M) = M_{0} \rightarrow M$ in \gmds.
  Assign $M$ to the partition with index $|P| \mod \sigma^{k-1}$.

  Because $M_{0}$ is in a cycle of length $\sigma^{k-1}$, every set $\mathcal{P}_{i}$ has at least one MDS assigned to it.
  Moreover, every MDS is assigned to exactly one set.
  Hence the sets $\mathcal{P}_{i}$ form a partition of the MDSs in the component.

  An edge between MDSs in sets $\mathcal{P}_{i}$ and $\mathcal{P}_{j}$ with $j > i+1$ would imply the existence of a cycle containing $M_{0}$ of length $< \sigma^{k-1}$, which is not possible. \qed
\end{proof}

\section{Cycle signature unique per component}\label{sec:cycle-sign-uniq}

An MDS $M$ is called \emph{$f$-terminal} if the only valid F-move in $M$ is $f$.

\begin{lemma}\label{thm:exist-f-terminal}
  For any $f \in \Sigma^{k-1}$ and in any component of \gmds, there exists an $f$-terminal MDS.
\end{lemma}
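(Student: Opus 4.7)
The plan is to exhibit an $f$-terminal MDS explicitly inside any cycle of length $\sigma^{k-1}$ of the component. I start with any MDS in the component and invoke Proposition~\ref{thm:comp-structure} (points~\ref{item:cycle-length} and~\ref{item:nodes-cycles}) to obtain a cycle of length $n = \sigma^{k-1}$ through it in which each of the $\sigma^{k-1}$ F-moves occurs exactly once. In particular, $f$ appears at some position; by cyclically shifting I may write the cycle as $M_0 \xrightarrow{f_0} M_1 \xrightarrow{f_1} \cdots \xrightarrow{f_{n-2}} M_{n-1} \xrightarrow{f_{n-1}=f} M_n = M_0$, so that $f$ is the last F-move applied. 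I will then claim that the penultimate MDS $M_{n-1}$ is $f$-terminal.

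For this claim I will lean on the pebble picture developed in the proof of Proposition~\ref{thm:comp-structure}: an MDS places exactly one pebble per PCR, an F-move $g$ is valid iff the pebbles on the $\sigma$ PCRs of $\lc(g)$ all sit on $\lc(g)$, and along a cycle of length $\sigma^{k-1}$ every pebble performs exactly one full lap around its PCR. The consequence I need is that, within the cycle, the pebble on a PCR $P$ occupies each position $w \in P$ during a single contiguous range of steps, and that it leaves $w$ precisely at the step at which the unique F-move $h$ with $\lc(h) \cap P = w$ is applied.

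Suppose for contradiction some $g \ne f$ is also valid at $M_{n-1}$. Then $g = f_{i^*}$ for some $i^* \in \{0,\ldots,n-2\}$, so $g$ is applied at step $i^*$, and by the preceding remark the pebble on each PCR $P$ of $\lc(g)$ leaves $\lc(g) \cap P$ exactly at step $i^*$. Hence the set of steps at which this pebble sits on $\lc(g) \cap P$ is an interval contained in $\{0, 1, \ldots, i^*\} \subseteq \{0, \ldots, n-2\}$ and in particular does not contain $n-1$. But validity of $g$ at $M_{n-1}$ would require all $\sigma$ such pebbles to sit on $\lc(g)$ at step $n-1$, a contradiction. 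Therefore $V(M_{n-1}) = \{f\}$, i.e.\ $M_{n-1}$ is $f$-terminal.

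The main obstacle will be to justify carefully the bound on the pebble-visitation interval. In the generic case where the pebble reaches $\lc(g) \cap P$ strictly after step~$0$, the interval is plainly some $\{a+1, \ldots, i^*\}$. The delicate case is when the pebble already sits on $\lc(g) \cap P$ at $M_0$: here I have to use the fact that each pebble completes only a single lap in a cycle of length $\sigma^{k-1}$, so after departing $\lc(g) \cap P$ at step $i^*$ the pebble does not return to that position until the cycle closes at $M_n = M_0$. Thus the visitation interval is still $\{0, 1, \ldots, i^*\}$, linearly, and $n-1$ is excluded. Once this bookkeeping is pinned down the contradiction above is immediate.
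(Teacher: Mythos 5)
Your construction fails at its central claim: the MDS immediately preceding the $f$-edge in a length-$\sigma^{k-1}$ cycle need not be $f$-terminal. The error is in your ``delicate case.'' When the pebble on a PCR $P$ sits on $w=\lc(g)\cap P$ at $M_0$, it departs at step $i^*$, travels once around $P$, and returns to $w$ at the step following the application of the F-move associated with the \emph{predecessor} of $w$ on $P$ --- generically some step $a\le n-1$, not step $n$. The visitation set is then $\{a,\dots,n-1\}\cup\{0,\dots,i^*\}$, which does contain $n-1$: ``one lap'' only guarantees the pebble returns to its starting position when the cycle closes, not that it avoids $w$ until then. And indeed the claim you are defending is false. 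Any strongly connected component with all out-degrees equal to $1$ is a single cycle, so any component of $\gmds$ with more than $\sigma^{k-1}$ MDSs (these exist; see Table~\ref{tab:number-comps} and Figure~\ref{fig:mds-graph}) contains an MDS $M$ with two valid F-moves $f\ne g$. The reordering argument in the proof of point~\ref{item:strongly-connected} of Proposition~\ref{thm:comp-structure} yields a length-$\sigma^{k-1}$ cycle through $M$ whose outgoing edge at $M$ is $f$; shifting it so that $f$ is last makes $M_{n-1}=M$, which is not $f$-terminal since $g$ is also valid there.

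A different argument is needed, and the paper's is short: start from any MDS of the component in which $f$ is valid (one exists by Proposition~\ref{thm:comp-structure}, since every F-move occurs on every cycle), and repeatedly apply any valid F-move \emph{other than} $f$. Because every cycle of $\gmds$ must use every F-move --- in particular $f$ --- this $f$-avoiding walk cannot close into a cycle, so it terminates at some MDS $M$; by Proposition~\ref{thm:f-moves-exist} the terminal $M$ still has a valid F-move, which by construction can only be $f$. The termination argument, not a position within a prescribed cycle, is what produces the $f$-terminal set.
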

\begin{proof}
  From Proposition~\ref{thm:comp-structure}, in any component there exists an MDS $M'$ where $f$ is a valid F-move.
  If there exists other valid F-moves than $f$ in $M'$, do them recursively.
  I.e., we do every possible F-move in $M'$ but refuse to do $f$.
  This creates a path $P$ of MDSs in \gmds starting at $M'$ that does not contain $f$ as an edge.

  Because every cycle in \gmds contains every possible F-move, $P$ cannot induce a cycle, and it must terminate at an MDS $M$.
  By construction $M$ is $f$-terminal.
  \qed
\end{proof}

An $f$-terminal MDS $M$ has a useful property: every maximal path in $D_{k}$ that avoids $M$ (as created by a walk like in Proposition~\ref{thm:f-moves-exist}) must start at a \kmer $m \in \rc(f)$.
Equivalently, any walk in $D_{k}$ that avoids $M$ following edges backward ends at some $m \in \rc(f)$.


\signatures*
\begin{proof}[Point~\ref{item:sig-diff}, different signatures]
  Fix $f \in \Sigma^{k-1}$ and by Lemma~\ref{thm:exist-f-terminal} we can assume that $M_{1}$ and $M_{2}$ are both $f$-terminal, each in its own component.
  We will construct a cycle $C$ in $D_{k}$ that has different hitting numbers between the components: $\hit_{M_{1}}(C) \ne \hit_{M_{2}}(C)$.

  $M_{1}$ and $M_{2}$ are in different components, so they are distinct MDSs and there exists a PCR $R$ where the selected \kmer is different.
  That is, $R \cap M_{1} \triangleq m_{1} \ne m_{2} \triangleq R \cap M_{2}$.
  Take a path $P$ in $D_{k}$ following edges backward from node $0f$ (which is in both $M_{1}$ and $M_{2}$) to $m_{1}$ that avoids nodes $af, a \in \Sigma \setminus \{0\}$.
  Path $P$ exists because $D_{k}$ is $(\sigma-1)$-connected.
  Because $m_{1} \in M_{1} \SymDiff M_{2}$, there must exist a first node $m \in P$ which is in $M_{1} \SymDiff M_{2}$.

  Let $P_{1}$ be the restriction of the path $P$ from $0f$ to $m$ and, WLOG, assume that $m \in M_{1}$.
  By construction, $|P_{1} \cap M_{1}| = |P_{1} \cap M_{2}| + 1$.

  Let $P_{2}$ be a path created by a maximal random walk in $D_{k}$, following edges backward, starting from $m$ and that avoids $M_{2}$.
  Because $M_{2}$ is $f$-terminal, the walk ends at a node $fa \in \rc(f), a \in \Sigma$.
  By construction, $|P_{2} \cap M_{1}| \ge |P_{2} \cap M_{2}| = 0$ ($P_{2}$ avoids nodes from $M_{2}$ but may contain nodes from $M_{1}$).

  Two cases can happen.
  First case, there exists a first node $m' \in P_{1} \cap P_{2}$.
  Then define the cycle $C$ as the restriction of $P_{1}$ from $m'$ to $m$ followed by the restriction of $P_{2}$ from $m$ to $m'$.
  Second case, $P_{1} \cap P_{2} = \emptyset$ and define the cycle $C$ as the concatenation of $P_{1}, P_{2}$ and backward edge $fa \rightarrow 0f$.

  In both cases, $C$ satisfies by construction $\hit_{M_{1}}(C) > \hit_{M_{2}}(C)$.
  \qed
  \end{proof}

\section{\gcomp is undirected}\label{sec:gcomp-undirected}

\begin{figure}
  \centering
  \includegraphics[scale=1.2]{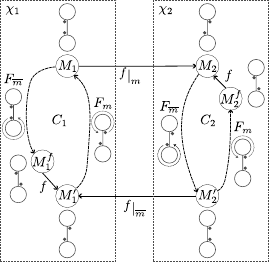}
  \caption{Simplified example for finding the complementary I-moves, when $\sigma = 2$.
    On the left box, component $\chi_{1}$ and component $\chi_{2}$ on the right, of \gmds.
    The cycle $C_{1}, C_{2}$ are cycles in $\chi_{1}$ and $\chi_{2}$ respectively.
    The simplified PCR/pebble drawings represent the position of the pebbles on the PCRs of $P_{m}$ (top PCR) and $P_{\overline{m}}$ (bottom PCR).
    The edge between these PCRs represents $f$.
    The PCR/pebbles drawings next to the MDS nodes represent the state of the PCRs for these MDSs, while the drawings next to the F-move lists represent the action of the list of F-moves on the pebbles.
    From the cycle $C_{1}$ in $\chi_{1}$, we construct cycle $C_{2}$ in $\chi_{2}$ by swapping the order of the F-moves: $(F_{\overline{m}}, f, F_{m}) \rightarrow (F_{m}, f, F_{\overline{m}})$.
    These cycles go through the desired MDSs $M'_{2}$ and $M'_{1}$ that are linked by the complementary I-move $\im{f}{\overline{m}}$.
  }
  \label{fig:complement-imove}
\end{figure}

\complementimove*
\begin{proof}
  See Figure~\ref{fig:complement-imove}.
  In component $\chi_{1}$, by Proposition~\ref{thm:comp-structure}, there is a cycle $C_{1}$ of length $\sigma^{k-1}$ that contains MDS $M_{1}$, and this cycle has $f$ has an F-move.
  Hence, $C_{1} = (M_{1}, \ldots, M^{f}_{1}, M'_{1}, \ldots)$ where $M^{f}_{1}$ is the MDS where $f$ is a valid I-move and $M'_{1} = fM^{f}_{1}$.
  Equivalently, looking at the edges, $C_{1} = (F_{\overline{m}}, f, F_{m})$ where $F_{\overline{m}}, F_{m}$ are lists of F-moves.

  In $M_{1}$, $\im{f}{m}$ is a valid I-move, which means that if $m_{a} = 1$, then $af \in M_{1}$ and $fa \in M_{1}$ otherwise.

  Let's call $P_{m}$ the set of PCRs that contain $af$ when $m_{a} = 1$, and $P_{\overline{m}}$ the PCRs containing $af$ when $m_{a} = 0$ ($P_{m}$ contains only the top PCR in Figure~\ref{fig:complement-imove}, and $P_{\overline{m}}$ the bottom PCR).

  In $M^{f}_{1}$, $f$ is a valid F-move, which means that $af \in M^{f}_{1}$ for all $a \in \Sigma$.
  In other words, the list of F-moves $F_{\overline{m}}$ made by the pebbles in the PCRs in $P_{\overline{m}}$ go around from $fa$ to $af$, while the pebbles in the PCRs in $P_{m}$ did not move.
  (The only way for the pebbles in the PCRs in $P_{m}$ to move is to do F-move $f$, which by construction is not in $F_{\overline{m}}$).

  Similarly, the list of F-moves $F_{m}$ made by the pebbles in the PCRs in $P_{m}$ go around from $fa$ to $af$, while the pebbles in the PCRs of $P_{\overline{m}}$ did not move.

  Now from $M_{1}$ do the valid I-move $\im{f}{m}$.
  This advances the pebbles in the PCRs of $P_{m}$ from $af$ to $fa$ (forward by 1 edge), to get to $M_{2}$ in component $\chi_{2}$, where $\rc(f) \subset M_{2}$.
  The position of the pebbles in $M_{1}$ and $M_{2}$ agree everywhere except on the PCRs of $P_{m}$.
  Because the F-moves in $F_{\overline{m}}$ do not affect the PCRs of $P_{m}$, the list $F_{\overline{m}}$ is a valid list of F-moves in $M_{2}$ as well.

  $fa \in M_{2}$ for all $a \in \Sigma$.
  Applying $F_{\overline{m}}$ to $M_{2}$ leads to MDS $M'_{2}$ where $af \in M'_{2}$ if $m_{a} = 0$ and $fa \in M'_{2}$ otherwise.
  In other words, I-move $\im{f}{\overline{m}}$ is valid in $M'_{2}$.
  It is easy to check that doing the I-move $\im{f}{\overline{m}}$ gets back to $M'_{1}$.

  For completion, one can check that the list of F-moves $F_{m}$ applies to $M'_{2}$ because $M'_{2}$ and $M'_{1}$ only differs on the pebbles on the PCRs of $P_{\overline{m}}$ and $F_{m}$ does not affect those PCRs.
  Applying $F_{m}$ get to $M^{f}_{2}$ where $f$ is a valid F-move and $M_{2} = fM^{f}_{2}$.

  Therefore, the cycle $C_{1} = (F_{\overline{m}}, f, F_{m})$ is a valid cycle in $\chi_{1}$ and contains $M_{1}$ and $M'_{1}$, while $C_{2} = (F_{m}, f, F_{\overline{m}})$ is valid in $\chi_{2}$ and contains $M_{2}$ and $M'_{2}$.
  \qed
\end{proof}

\section{Non-decycling PCR sets}\label{sec:non-decycling-pcr}

\begin{proposition}
  Let \gpcr\ be the graph with non-decycling PCR sets as nodes and F-moves as edges.
  Then each component of $G$ is a DAG.
\end{proposition}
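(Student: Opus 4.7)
The plan is to argue by contradiction. Suppose some component of $G_{\text{PCR}}$ contains a directed F-move cycle $M_0 \to M_1 \to \cdots \to M_N = M_0$, where $M_0$ is a non-decycling PCR set. The pebble/semaphore argument used in the proofs of parts~(\ref{item:cycle-length}) and~(\ref{item:fmoves-cycles}) of Proposition~\ref{thm:comp-structure} nowhere appeals to $M_0$ being decycling: it uses only that an F-move advances one pebble per PCR by a single step and that an F-move on $f$ requires $\lc(f) \subseteq M$. Reapplying that argument verbatim, the assumed F-move cycle forces the pebble of every PCR to make at least one complete tour during $M_0 \to \cdots \to M_N$; in particular, for every PCR $P$, the pebble of $P$ visits every vertex of $P$ at some intermediate $M_t$.

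Now I exploit that $M_0$ is non-decycling: fix a directed cycle $C \subseteq D_k \setminus M_0$. The core of the proof is the claim that no vertex of $C$ ever enters any $M_t$. Granted this, pick any PCR $P$ that $C$ traverses and let $v = P \cap C$; the pebble of $P$ must pass through $v$ at some step, placing $v$ in some $M_t$, a contradiction. This rules out the assumed F-move cycle and shows that each component of $G_{\text{PCR}}$ is a DAG.

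To prove the claim I use a minimum-counterexample argument. Suppose some vertex of $C$ belongs to some $M_t$, and let $t^* \geq 1$ be the smallest such index (we have $t^* \geq 1$ because $C \cap M_0 = \emptyset$); let $v^* \in C$ witness $v^* \in M_{t^*}$. Since no vertex of $C$ lies in $M_{t^*-1}$, the F-move $F_{t^*}$ taking $M_{t^*-1}$ to $M_{t^*}$ must add $v^*$, so it is the F-move on the $(k-1)$-prefix $f = v^*[1:k-1]$ and its validity requires $\lc(f) \subseteq M_{t^*-1}$. But the in-neighbors of $v^*$ in $D_k$ are exactly $\lc(f)$, so the predecessor $u^*$ of $v^*$ along $C$ lies in $\lc(f) \subseteq M_{t^*-1}$; since $u^* \in C$, this contradicts the minimality of $t^*$. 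The main obstacle is conceptual rather than technical: one must verify that the pebble/semaphore analysis of Proposition~\ref{thm:comp-structure} is purely set-theoretic and hence carries over to PCR sets that are not decycling, and then spot the back-propagation obstruction along $C$ that the non-decycling hypothesis supplies but that is unavailable for MDSs.
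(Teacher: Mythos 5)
Your proof is correct and follows essentially the same route as the paper's: take a cycle $C$ in $D_k$ avoiding the non-decycling set, show it continues to avoid every set along the purported F-move cycle, and contradict this with the consequence of the pebble/semaphore analysis of Proposition~\ref{thm:comp-structure} that every $k$-mer must enter some $M_t$ during the cycle. The only cosmetic difference is that where the paper cites the hitting-number invariance of F-moves, you re-derive the needed special case from scratch via your minimal-index back-propagation argument (a reasonable precaution, since that invariance was formally stated only for MDSs even though its proof is set-theoretic).
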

\begin{proof}
  Suppose there exists a cycle $\mathcal{C} = \{ M_{1}, \ldots, M_{n} \}$ in \gpcr, where $M_{i+1} = f_{i} M_{i}$.
  Because $M_{1}$ is not decycling, then there exists a cycle $C$ in $D_{k} \setminus M_{1}$.
  Because RF-moves preserve the hitting number, $C$ is also a cycle in $D_{k} \setminus f_{1}M_{1}$, and by induction a cycle in $D_{k} \setminus M_{i}, i \in [1, n]$.
  From the proof Proposition~\ref{thm:comp-structure}, any cycle $\mathcal{C}$ must do every $\sigma^{k-1}$ F-move to return to the starting set, and the union of all the left-companions of the F-moves is the set of all \kmers.
  This is a contradiction.
\end{proof}

\section{I-move and constrained cycles}\label{sec:i-move-constrained}

\begin{figure}
  \centering
  \includegraphics[scale=1.2]{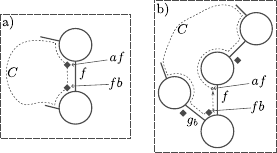}
  \caption{a)~The $\im{f}{m}$ with $m_a = 1$ and $m_{b} = 0$ is not possible because $\hit_{M}(C) = 1$.
    When the I-move $\im{f}{m}$ is valid, necessarily $C$'s hitting number must be at least $2$.
    b)~Suppose $\im{f}{m}$ is never valid, then a backward walk creates a cycle with hitting number $1$ using the edge $af \rightarrow fb$.
  }
  \label{fig:constrained-cycle}
\end{figure}

\imoveconstrained*
\begin{proof}
  Let $\im{f}{m}$ be a potential I-move with $m_{a} = 1$ and $m_{b} = 0$ ($a, b \in \Sigma, a \ne b$).

  Suppose there exists a constrained cycle $C$ in the de~Bruijn graph $D_{k}$ using the edge $af \rightarrow fb$, and $H_{\chi}(C) = 1$.
  If $\im{f}{m}$ is a valid I-move in an MDS $M \in \chi$, then by definition $af, fb \in M$, hence $H_{M}(C) \ge 2$.
  This contradict that $C$ is constrained (see Figure~\ref{fig:constrained-cycle} a).

  Conversely, suppose that $\im{f}{m}$ is not a valid I-move in any MDS of $\chi$.
  Let $M^{f} \in \chi$ be an MDS where $f$ is a valid F-move and $M = fM^{f}$.
  Then $\rc(f) \subset M$.
  Define $g_{c} \triangleq f[2:k-2]c, c\in\Sigma$, that is for all right-companion of $f$, $fc \in \lc(g_{c})$.

  From $M$ recursively do all valid F-moves except for the F-moves $g_{c}$ where $m_{c} = 0$ to obtain $M' \in \chi$ where the only valid F-moves are exactly those than we refused to do.
  There must exist $a \in \Sigma$ such that $m_{a} = 1$ and $af \notin M'$, otherwise $\im{f}{m}$ is a valid I-move in $M'$ (see Figure~\ref{fig:constrained-cycle}b).
  From $af$ do a walk that avoids $M'$ using backward edges.
  This walk must end at one of the right-companions of the valid F-moves in $M'$, that is there exists $b$ such that walk ends at $m' \in \rc(g_{b})$.
  By construction there is a backward edge $m' \rightarrow fb$.
  Then follow the backward edge $fb \rightarrow af$ to create a cycle $C$.
  By construction the only node from $M'$ in cycle $C$ is $fb$, hence $\hit_{M'}(C) = 1$ and $C$ uses the edge $af \rightarrow fb$ with $m_{a} = 1$ and $m_{b} = 0$.
  \qed
\end{proof}

\end{document}
